\setlist[enumerate,1]{label={(\roman*)}}
\theoremstyle{plain}
\newtheorem{theorem}{Theorem}[section]
\newtheorem{lemma}[theorem]{Lemma}
\newtheorem{proposition}[theorem]{Proposition}
\newtheorem{remark}[theorem]{Remark}
\theoremstyle{definition}
\newtheorem{definition}[theorem]{Definition}
\DeclareMathOperator{\ev}{ev}
\newcommand{\setbuild}[2]{\left\{#1\middle|#2\right\}}
\newcommand{\positiveintegers}{\mathbb{N}_{>0}}
\newcommand{\naturals}{\mathbb{N}}
\newcommand{\reals}{\mathbb{R}}
\newcommand{\complexes}{\mathbb{C}}
\newcommand{\nonnegativereals}{\mathbb{R}_{\ge 0}}
\newcommand{\distributions}[1][]{\mathcal{P}_{#1}}
\newcommand{\norm}[2][]{\left\|#2\right\|_{#1}}
\DeclareMathOperator{\GHZ}{GHZ}
\DeclareMathOperator{\EPR}{EPR}
\DeclareMathOperator{\Tr}{Tr}
\newcommand{\entropy}{H}
\newcommand{\ket}[1]{\left|#1\right\rangle}
\newcommand{\bra}[1]{\left\langle #1\right|}
\newcommand{\ketbra}[2]{\left|#1\middle\rangle\!\middle\langle#2\right|}
\newcommand{\braket}[2]{\left\langle#1\middle|#2\right\rangle}
\newcommand{\fidelity}{F}
\newcommand{\purifieddistance}{D}
\DeclareMathOperator{\states}{\mathcal{S}}
\newcommand{\loccto}[1][]{\xrightarrow{\textnormal{LOCC}}_{#1}}
\newcommand{\loccqrate}[2]{R_{\textnormal{LOCCq}}(#1\to #2)}
\newcommand{\loccrate}[2]{R_{\textnormal{LOCC}}(#1\to #2)}
\newcommand{\functionals}[1]{\mathcal{F}_{#1}}
\newcommand{\purefunctionals}[1]{\mathcal{F}^{\textnormal{pure}}_{#1}}
\title{Asymptotic continuity of additive entanglement measures}
\author[1,2]{P\'eter Vrana}
\affil[1]{Institute of Mathematics, Budapest University of Technology and Economics, Egry~J\'ozsef u.~1., Budapest, 1111 Hungary.}
\affil[2]{MTA-BME Lend\"ulet Quantum Information Theory Research Group, Budapest, Hungary}
\begin{document}
\maketitle

\begin{abstract}
We study rates asymptotic of transformations between entangled states by local operations and classical communication and a sublinear amount of quantum communication. It is known that additive asymptotically continuous entanglement measures provide upper bounds on the rates that are achievable with asymptotically vanishing error. We show that for transformations between pure states, the optimal rate between any pair of states can be characterized as the infimum of such upper bounds provided by fully additive asymptotically continuous entanglement measures.
\end{abstract}

\section{Introduction}

The uniqueness theorem singles out the entropy of entanglement as the essentially unique entanglement measure for pure bipartite states, in the context of asymptotic transformations by local operations and classical communication (LOCC) \cite{popescu1997thermodynamics}. A closer examination of the assumptions reveals that even though there exist other quantities that do not increase under LOCC, the only one that survives in the asymptotic limit is the entropy of entanglement \cite{vidal2000entanglement}. The main message of the uniqueness result is that, asymptotically, there is a single kind of pure bipartite entanglement and the states only differ in the amount of entanglement they contain, as measured by a single number.

The situation changes when mixed or multipartite states are considered. The generalization of the uniqueness theorem to mixed bipartite states has a weaker conclusion: any additive (i.e. $E(\rho^{\otimes n})=nE(\rho)$), asymptotically continuous and normalized entanglement measure lies between the distillable entanglement $E_D$ \cite{bennett1996mixed,rains1999rigorous} and the entanglement cost $E_C$ \cite{hayden2001asymptotic}, two operationally defined entanglement measures quantifying the amount of pure entanglement (in the form of Bell pairs) into or from which the state in question can asymptotically be transformed (see \cite{horodecki2000limits,donald2002uniqueness} for the precise assumptions).

More generally, one can consider rates of transformations between any pair of states \cite{bennett1996mixed,horodecki2003rates}. If $\loccrate{\rho}{\sigma}$ denotes the maximum rate for asymptotic transformations of $\rho$ into $\sigma$, and $E$ is an asymptotically continuous additive LOCC-monotone, then the inequality
\begin{equation}
\loccrate{\rho}{\sigma}\le\frac{E(\rho)}{E(\sigma)}
\end{equation}
holds. However, it is not immediately clear how tight such an upper bound can be. In fact, even the existence of such a functional $E$ is not trivial, and the only known non-operationally defined quantities with these properties are the regularized relative entropy of entanglement \cite{vedral1997quantifying,christandl2006structure}, the squashed entanglement \cite{christandl2004squashed,alicki2004continuity} and its multipartite variants \cite{yang2009squashed}, and the conditional entanglement of mutual information \cite{yang2008additive}.

Another question is to what extent are the conditions on $E$ necessary. While additivity is a reasonable requirement in an asymptotic setting (and can be enforced by considering the regularization instead), asymptotic continuity is apparently an \emph{ad hoc} condition (although suggested by the Fannes inequality \cite{fannes1973continuity}, which implies a continuity estimate for the entropy of entanglement), and indeed different versions thereof have been considered in the literature. It should also be noted that the logarithmic negativity provides an upper bound on the distillable entanglement even though it is \emph{not} asymptotically continuous \cite{vidal2002computable}.

Our main result can be viewed as a partial answer to these questions. We will consider a variant of the transformation rates defined in \cite{bennett2000exact} and further investigated in \cite{thapliyal2003multipartite}. Let $\loccqrate{\rho}{\sigma}$ be the maximum rate at which $\rho$ can be transformed into $\sigma$ by LOCC transformations and a sublinear amount of quantum communication, with asymptotically vanishing error. We prove the following characterization of the rates for transformations between pure states in terms of fully additive (i.e. $E(\rho\otimes\sigma)=E(\rho)+E(\sigma)$), asymptotically continuous (in the sense of Definition~\ref{def:asymptoticcontinuity} below) entanglement measures:
\begin{theorem}\label{thm:main}
For every pair of pure $k$-partite states $\ket{\varphi}$ and $\ket{\psi}$ the largest achievable rate is
\begin{equation}\label{eq:ratecharacterization}
\loccqrate{\ketbra{\varphi}{\varphi}}{\ketbra{\psi}{\psi}}=\inf_{\substack{E  \\  E(\ket{\psi})\neq 0}}\frac{E(\ket{\varphi})}{E(\ket{\psi})},
\end{equation}
where the infimum is over functions $E$ on $k$-partite pure states of arbitrary dimension that are
\begin{itemize}
\item normalized to $1$ on the Greenberger--Horne--Zeilinger (GHZ) state,
\item fully additive,
\item monotone on averege under LOCC,
\item asymptotically continuous.
\end{itemize}
\end{theorem}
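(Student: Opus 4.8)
The plan is to prove the two inequalities separately. The inequality $\loccqrate{\ketbra{\varphi}{\varphi}}{\ketbra{\psi}{\psi}}\le\inf_E E(\ket{\varphi})/E(\ket{\psi})$ is the direction already indicated in the introduction, adapted to allow quantum communication. Given an admissible $E$ and a protocol converting $\ket{\varphi}^{\otimes n}$ into a state $\epsilon_n$-close to $\ket{\psi}^{\otimes m}$ using $q(n)=o(n)$ qubits of quantum communication, I would first replace the quantum communication by the consumption of $q(n)$ EPR pairs via teleportation, turning the protocol into pure LOCC acting on $\ket{\varphi}^{\otimes n}\otimes\EPR^{\otimes q(n)}$. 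Monotonicity together with full additivity then bounds the value of the output above by $nE(\ket{\varphi})+q(n)E(\EPR)$, while asymptotic continuity bounds it below by $mE(\ket{\psi})-o(m)$, since the output is $\epsilon_n$-close to $\ket{\psi}^{\otimes m}$ and $\epsilon_n\to 0$. As $q(n)=o(n)$ and $E(\EPR)$ is a fixed constant, dividing by $nE(\ket{\psi})$ and letting $n\to\infty$ yields $m/n\le E(\ket{\varphi})/E(\ket{\psi})$, and taking the infimum over $E$ gives the claimed bound.

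The reverse inequality, the achievability of every rate below the infimum, carries the main content, and for it I would construct admissible functionals that witness the optimal rate. First I would organize the pure $k$-partite states of all dimensions into a commutative monoid under $\otimes$, equipped with the preorder $\preceq$ of asymptotic convertibility by LOCC and a sublinear amount of quantum communication with vanishing error; the rate $\loccqrate{\cdot}{\cdot}$ is then the maximal-ratio function associated with this preordered monoid, and the GHZ state serves as a normalizing generator. The aim is to show that the monotone additive real functionals on this monoid recover $\loccqrate{\cdot}{\cdot}$ exactly as an infimum of ratios. I would obtain such functionals from an asymptotic-spectrum / Vergleichsstellensatz duality for preordered commutative monoids: after verifying the structural axioms (compatibility of $\preceq$ with $\otimes$, and that the GHZ state is Archimedean, i.e.\ every state lies between two of its tensor powers in the preorder), the duality produces, for each pair $(\ket{\varphi},\ket{\psi})$ and each $\epsilon>0$, a monotone additive functional $E$ normalized on the GHZ state with $E(\ket{\varphi})/E(\ket{\psi})\le\loccqrate{\ketbra{\varphi}{\varphi}}{\ketbra{\psi}{\psi}}+\epsilon$. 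Combined with the first inequality, this forces equality, \emph{provided} the functional so produced lies in the admissible class.

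The step I expect to be the main obstacle---and the point where the sublinear quantum communication is indispensable---is exactly this proviso: ensuring that the functionals supplied by the duality are \emph{asymptotically continuous}, and not merely additive and monotone, so that restricting the infimum to the admissible class does not increase it. The quantum-communication allowance is what makes the preorder insensitive to small perturbations: two states within purified distance $\delta$ on spaces of local dimension $d$ can be interconverted using only $O(\delta\log d)$ qubits of quantum communication and no net entanglement, so the preorder is ``fattened'' by an amount controlled by $\delta\log d$. I would exploit this to show that any functional respecting $\preceq$ automatically inherits the corresponding dimension-weighted modulus of continuity, hence satisfies Definition~\ref{def:asymptoticcontinuity}. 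Making this quantitative---verifying that the cost of correcting an error of size $\delta$ scales like $\delta\log d$, and that this estimate transfers from the concrete preorder to the abstract dual functionals---is the crux; by contrast, normalization on the GHZ state, the Archimedean property, and the interchange of limits in the definition of the rate are comparatively routine.
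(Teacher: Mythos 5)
Your overall architecture matches the paper's: the upper bound via monotonicity, additivity and asymptotic continuity; the lower bound via the duality theorem for preordered commutative monoids (the paper uses \cite[8.24.~Theorem]{fritz2017resource}) applied to the monoid of pure states under asymptotic convertibility; and the correct identification that the remaining task is to show that the abstract dual functionals lie in the admissible class. But the step you flag as the crux is resolved by a mechanism that does not work, and one step of the easy direction is also skipped.

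On the upper bound: you apply asymptotic continuity to ``the value of the output'', but the output of the protocol is a mixed state and $E$ is defined only on pure states. The paper resolves this by purifying the protocol (recording and broadcasting all intermediate measurement outcomes), so the output becomes an ensemble $\{P(x),\varphi_x\}$ of pure states averaging to a state $\epsilon^2$-close to $\psi^{\otimes m}$; a Markov-inequality argument shows that outside a set of probability at most $\epsilon^2$ each $\varphi_x$ is $\epsilon$-close to $\psi^{\otimes m}$, and monotonicity on average plus asymptotic continuity applied to those pure $\varphi_x$ closes the bound. This is fixable, but it is not automatic from the properties you invoke.

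The genuine gap is in the achievability direction. Your proposed mechanism---that any two pure states within purified distance $\delta$ on local dimension $d$ are interconvertible with $O(\delta\log d)$ qubits of quantum communication, so that the preorder is ``fattened'' and every monotone functional automatically inherits asymptotic continuity---is unsubstantiated, and for $k\ge 3$ there is no known protocol of this kind. For $k=2$ it follows from the Fannes inequality together with entanglement concentration and dilution, but there is no multipartite analogue of concentration/dilution; moreover, if such a lemma held with the stated $O(\delta)$ dependence, it would immediately give $|E(\varphi)-E(\psi)|\le O(\delta)\log\dim\mathcal{H}+o(1)$ for every monotone functional, which is precisely a question the paper explicitly leaves open (whether the $O(\delta^{2/(k+1)})$ in its continuity estimate can be improved to $O(\delta)$). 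The paper's actual route to asymptotic continuity of the dual functionals is entirely different and does not rely on any robustness of the preorder: it first establishes, by a method-of-types argument on $(\sqrt{p}\varphi\oplus\sqrt{1-p}\psi)^{\otimes n}$, the chain rule $E(\sqrt{p}\varphi\oplus\sqrt{1-p}\psi)=pE(\varphi)+(1-p)E(\psi)+h(p)$ for every fully additive, monotone-on-average element of $\purefunctionals{k}$, and then derives an explicit continuity estimate by a single-copy construction: take $\omega=A\varphi+B\psi$ in the span of the two states, form $\sqrt{q}\varphi\oplus\sqrt{1-q}\omega$, and let each party measure its flag qubit in a tilted basis so that one outcome projects exactly onto $\psi$ with probability bounded below in terms of the fidelity; monotonicity on average plus the chain rule then yields \eqref{eq:Econtinuityestimate}. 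Without this (or a genuine proof of your interconversion lemma) the lower bound does not close. Incidentally, the sublinear quantum communication is not what makes the functionals continuous---it enters only in matching the abstract regularized rates of the duality theorem with the operational rates of Definition~\ref{def:achievablerate}.
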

In addition, we show that any fully additive functional on pure states that is nonincreasing under asymptotic LOCC transformations is necessarily asymptotically continuous, with an explicit continuity estimate. The picture for mixed states is less clear at the moment. While we can show that it is still sufficient to consider fully additive measures, our proof method does not seem to be sufficiently powerful to show that one can restrict to asymptotically continuous ones and still obtain a characterization like \eqref{eq:ratecharacterization}.

In order to efficiently present our partial results for mixed states and the stronger result for pure states, the proof of our main result is split into two parts. In Section~\ref{sec:general} we characterize $\loccqrate{\rho}{\sigma}$ for arbitrary (pure or mixed) states in terms of fully additive functions that are monotone under asymptotic LOCC transformations. The main tool here is a reduction to general results in the mathematical theory of resources. The second part is the content of Section~\ref{sec:pure} where, now restricting to pure states, we find equivalent conditions for monotonicity under asymptotic LOCC transformations, assuming additivity and monotonicity on average under LOCC. Among these are asymptotic continuity (which is clearly sufficient). Interestingly, one of the equivalent conditions is purely algebraic and reminiscent of the chain rule satisfied by the Shannon entropy:
\begin{equation}
E(\sqrt{p}\varphi\oplus\sqrt{1-p}\psi)=pE(\varphi)+(1-p)E(\psi)+h(p),
\end{equation}
where $p\in[0,1]$ and $\oplus$ is the direct sum, i.e. superposition of locally orthogonal vectors, and $h(p)$ is the binary entropy. It is worth pointing out the connection to the uniqueness theorem as an illustration: for bipartite pure states (more generally, generalized GHZ states), this property together with vanishing on separable states suffices to ensure that $E$ is equal to the Shannon entropy of the Schmidt coefficients.

\section{Notations}\label{sec:notations}

All logarithms are to base $2$. The binary entropy function is $h(p)=-p\log p-(1-p)\log(1-p)$.

The number of subsystems $k$ will be fixed throughout. $\mathcal{H},\mathcal{K}$ refer to Hilbert spaces of composite systems with e.g. $\mathcal{H}=\mathcal{H}_1\otimes\cdots\otimes\mathcal{H}_k$. We denote the set of states on $\mathcal{H}$ by $\states(\mathcal{H})$. For $\rho\in\states(\mathcal{H})$ and $\sigma\in\states(\mathcal{K})$ the tensor product $\rho\otimes\sigma$ is regarded as a $k$-partite state with the grouping $(\mathcal{H}_1\otimes\mathcal{K}_1)\otimes\cdots\otimes(\mathcal{H}_k\otimes\mathcal{K}_k)=:\mathcal{H}\otimes\mathcal{K}$.

It will be useful to consider a kind of sum operation on $k$-partite Hilbert spaces, based on the direct sums of the local Hilbert spaces:
\begin{equation}
(\mathcal{H}_1\oplus\mathcal{K}_1)\otimes\cdots\otimes(\mathcal{H}_k\oplus\mathcal{K}_k).
\end{equation}
Note that this product contains both $\mathcal{H}$ and $\mathcal{K}$ as subspaces, orthogonal to each other. For vectors $\varphi\in\mathcal{H}$ and $\psi\in\mathcal{K}$ we consider the direct sum $\varphi\oplus\psi$ as an element of this product space. When $\varphi,\psi$ are vectors in the same Hilbert space $\mathcal{H}$, their direct sum can also be viewed as $\varphi\otimes\ket{00\ldots 0}+\psi\otimes\ket{11\ldots 1}\in\mathcal{H}\otimes(\complexes^2)^{\otimes k}$, up to a local unitary transformation.

For example, the generalized GHZ state is the pure state with state vector
\begin{equation}
\frac{1}{\sqrt{r}}(\ket{11\ldots 1}+\ket{22\ldots 2}+\cdots+\ket{rr\ldots r}),
\end{equation}
which is local unitary equivalent to the direct sum of $r$ copies of $\frac{1}{\sqrt{r}}\ket{00\ldots 0}$. We will denote the corresponding state by $\GHZ_r$, omitting the subscript when $r=2$. By a slight abuse of notation, we will write $\varphi,\psi,\GHZ_r,\ldots$ both for the unit vectors and the state determined by them. 

We equip each state space $\states(\mathcal{H})$ with the purified distance $\purifieddistance(\rho,\sigma)=\sqrt{1-\fidelity(\rho,\sigma)^2}$ \cite[Definition 4.]{tomamichel2010duality} (see also \cite{gilchrist2005distance}), where
\begin{equation}\label{eq:fidelity}
\fidelity(\rho,\sigma)=\Tr\sqrt{\sigma^{1/2}\rho\sigma^{1/2}}
\end{equation}
is the fidelity. The purified distance is a metric that in addition satisfies
\begin{equation}
\purifieddistance(\rho_1\otimes\rho_2,\sigma_1\otimes\sigma_2)\le \purifieddistance(\rho_1,\sigma_1)+\purifieddistance(\rho_2,\sigma_2).
\end{equation}
Completely positive trace-preserving maps are contractive with respect to the purified distance.

We will write $\rho\loccto\sigma$ if there is a channel $\Lambda$ that can be implemented via local operations and classical communication (an LOCC channel) and $\Lambda(\rho)=\sigma$. Similar notation will be used for approximate transformations: $\rho\loccto[\epsilon]\sigma$ means that there exists another state $\sigma'$ (on the same space as $\sigma$) such that $\rho\loccto\sigma'$ and $\purifieddistance(\sigma',\sigma)\le\epsilon$ (in particular, $\loccto$ is the same as $\loccto[0]$). Using the properties of the purified distance and that compositions and tensor products of LOCC channels are LOCC channels, one can see the implications
\begin{equation}
(\rho\loccto[\epsilon_1]\sigma\text{ and }\sigma\loccto[\epsilon_2]\tau)\implies(\rho\loccto[\epsilon_1+\epsilon_2]\tau)
\end{equation}
and
\begin{equation}
(\rho_1\loccto[\epsilon_1]\sigma_1\text{ and }\rho_2\loccto[\epsilon_2]\sigma_2)\implies(\rho_1\otimes\rho_2\loccto[\epsilon_1+\epsilon_2]\sigma_1\otimes\sigma_2).
\end{equation}
In addition, LOCC transformations between pure states have the following compatibility with the direct sum \cite[Proposition 2.]{jensen2019asymptotic}:
\begin{equation}
(\varphi_1\loccto\psi_1\text{ and }\varphi_2\loccto\psi_2)\implies\sqrt{p}\varphi_1\oplus\sqrt{1-p}\varphi_2\loccto\sqrt{p}\psi_1\oplus\sqrt{1-p}\psi_2.
\end{equation}

\section{Asymptotic LOCC transformations}\label{sec:general}

In the setting of Shannon theory, we consider LOCC transformations in the asymptotic limit of many copies, allowing approximate transformations with an error approaching $0$, and assisted with a sublinear number of additional GHZ states. This is called LOCCq in \cite{bennett2000exact}, where the ``q'' stands for quantum communication of $o(n)$ qubits for $n$ copies, which is equivalent to $o(n)$ GHZ states.
\begin{definition}\label{def:achievablerate}
Let $\rho,\sigma$ be $k$-partite states (possibly on different Hilbert spaces, which we leave implicit). A number $r\in\nonnegativereals$ is an \emph{achievable rate} (for transforming $\rho$ into $\sigma$) if
\begin{equation}
\forall\delta>0:\limsup_{n\to\infty}\inf\setbuild{\epsilon\in\nonnegativereals}{\rho^{\otimes n}\otimes\GHZ^{\otimes\lfloor\delta n\rfloor}\loccto[\epsilon]\sigma^{\otimes\lceil rn\rceil}}=0.
\end{equation}
The supremum of achievable rates will be denoted by $\loccqrate{\rho}{\sigma}$.
\end{definition}
If $\rho$ is distillable, i.e. for every $\epsilon$ there is an $n$ such that $\rho^{\otimes n}\loccto[\epsilon]\GHZ$, then we obtain the same supremum if we require transformations without the sublinear supply of GHZ states \cite{thapliyal2003multipartite} (since these can be obtained from a small number of copies of $\rho$ without changing the rate \cite{maneva2002improved,chen2007multi}). In particular, the entanglement cost of a bipartite state $\rho$ satisfies $E_C(\rho)=\loccrate{\EPR}{\rho}^{-1}=\loccqrate{\EPR}{\rho}^{-1}$, and if $E_D(\rho)=\loccrate{\rho}{\EPR}>0$, then also $E_D(\rho)=\loccqrate{\rho}{\EPR}$.

When $k=2$ and both $\rho$ and $\sigma$ are pure states, the values of $\loccqrate{\rho}{\sigma}$ and of $\loccrate{\rho}{\sigma}$ are equal to $\frac{\entropy(\Tr_1\rho)}{\entropy(\Tr_1\sigma)}$ \cite{bennett1996concentrating,thapliyal2003multipartite}. For mixed states or when $k\ge 3$, the problem of determining either $\loccqrate{\rho}{\sigma}$ or $\loccrate{\rho}{\sigma}$ in general is wide open.

This problem fits in the general framework of resource theories, in particular in the mathematical framework of preordered commutative monoids, as we will see below. We briefly recall the required definitions and results. We use a multiplicative notation $1,xy,x^n,\ldots$, which aligns better with tensor products and powers, but otherwise follow \cite{fritz2017resource}. A preordered commutative monoid gives rise to an ordered commutative monoid by identifying $x$ and $y$ whenever both $x\ge y$ and $x\le y$ hold. The results of \cite{fritz2017resource} can be applied after this identification and translated back to the preordered setting when convenient.
\begin{definition}
A \emph{preordered commutative monoid} is a set $M$ equipped with a binary operation $\cdot$ that is associative and commutative, and has a neutral element $1$; a preorder $\le$ (i.e. a reflexive and transitive relation); such that $x\ge y$ implies $xz\ge yz$ for all $x,y,z\in M$.
\end{definition}
In our case $M$ will be the set of (equivalence classes of) $k$-partite states, the operation is the tensor product, and the preorder is given by asymptotic LOCC transformations (see below for details).

Our setting is special in that the inequality $x\ge 1$ holds for all $x\in M$. From now on we will assume this property. Compared to the general situation treated in \cite{fritz2017resource}, this results in simplifications of some of the definitions and formulas. Here we state only the special forms that take advantage of this fact (see \cite[3.19. Remark]{fritz2017resource}).
\begin{definition}
An element $g\in M$ is a \emph{generator} if for every $x\in M$ there exists an $n\in\naturals$ such that $g^n\ge x$.
\end{definition}

\begin{definition}
A \emph{functional} on the preordered commutative monoid $M$ is a map $f:M\to\reals$ satisfying $f(xy)=f(x)+f(y)$ and $x\ge y\implies f(x)\ge f(y)$ for all $x,y\in M$.

Let $g$ be a generator and $x,y\in M$. $r\in\nonnegativereals$ is a \emph{regularized rate} from $x$ to $y$ if for every $\delta>0$ and neighbourhood $U$ of $r$ there is a fraction $\frac{m}{n}\in U$ and $d\in\naturals$ such that $d\le\delta\max(m,n)$ and $x^ng^d\ge y^m$.
\end{definition}
Functionals provide upper bounds on regularized rates since $x^ng^d\ge y^m$ implies $f(x)+\frac{d}{n}f(g)\ge\frac{m}{n}f(y)$, therefore $f(x)\ge rf(y)$ for every regularized rate $r$. A central result is that regularized rates are in fact characterized by functionals:
\begin{theorem}[{\cite[8.24. Theorem]{fritz2017resource}}]\label{thm:regularizedratefromfunctionals}
The supremum of regularized rates from $x$ to $y$ is equal to
\begin{equation}\label{eq:regularizedrateformula}
\inf_f\frac{f(x)}{f(y)},
\end{equation}
where the infimum ranges over functionals $f$ that satisfy $f(y)\neq 0$.
\end{theorem}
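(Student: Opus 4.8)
The easy half of the equality is the content of the remark preceding the statement: since $x\ge 1$ for every $x\in M$, monotonicity forces $f(x)\ge f(1)=0$, so every functional is nonnegative, and from $x^ng^d\ge y^m$ together with additivity and monotonicity one obtains $f(x)+\tfrac{d}{n}f(g)\ge\tfrac{m}{n}f(y)$. Letting the approximating fractions $\tfrac{m}{n}$ approach a regularized rate $r$ while $\tfrac{d}{n}\le\delta\,\tfrac{\max(m,n)}{n}\to 0$ gives $f(x)\ge rf(y)$, hence $r\le f(x)/f(y)$ for every admissible $f$; thus the supremum of regularized rates is at most $\inf_f f(x)/f(y)$. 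The whole problem is therefore the reverse inequality, which I would prove in contrapositive form: assuming the supremum of regularized rates is strictly below $\inf_f f(x)/f(y)$, I fix $r_0$ strictly between the two values and construct a single functional $f$ with $f(y)\neq 0$ and $f(x)<r_0f(y)$, contradicting the definition of the infimum.

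The plan is to produce $f$ by a Hahn--Banach separation after linearizing the monoid. First pass to the associated \emph{ordered} commutative monoid (identifying $x$ and $y$ when $x\ge y\ge x$), form its Grothendieck group $\hat M$, and set $W:=\hat M\otimes_{\mathbb{Z}}\reals$. Let $C\subseteq W$ be the convex cone generated by the classes $[x]-[y]$ with $x\ge y$; because every $x\ge 1$ we have $[x]\in C$, and a functional on $M$ is precisely the restriction to $M$ of a linear functional on $W$ that is nonnegative on $C$, i.e. an element of the dual cone $C^{*}$. Normalizing by $f(g)=1$ is legitimate since $g$ is a generator: from $g^{n}\ge x$ one gets $0\le f(x)\le nf(g)$, so $f\equiv 0$ whenever $f(g)=0$, and the normalized functionals form a compact set inside $\prod_{x}[0,n_x]$ (with $g^{n_x}\ge x$) by Tychonoff. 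This compactness guarantees that the infimum in \eqref{eq:regularizedrateformula} is attained, and it is what makes $C^{*}$ behave well in the separation. In these terms the goal $f(x)<r_0f(y)$ is exactly $\phi(v_0)<0$ for the vector $v_0:=[x]-r_0[y]\in W$.

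The bridge between the two pictures is the equivalence
\begin{equation}
r_0\text{ is a regularized rate}\quad\Longleftrightarrow\quad v_0=[x]-r_0[y]\in\overline{C},
\end{equation}
where the $o(n)$ catalytic generators are exactly what realize the passage from $C$ to its closure: dividing $n[x]+d[g]-m[y]\in C$ by $n$ and using $\tfrac{m}{n}\to r_0$, $\tfrac{d}{n}\to 0$ sends the $[g]$-term to zero and lands $v_0$ in $\overline{C}$, while conversely the generators absorb the rounding of real coefficients to integers and the approximation of a limit point of $C$ by an honest inequality $x^ng^d\ge y^m$. Granting this, the hypothesis that $r_0$ exceeds every regularized rate says $v_0\notin\overline{C}$, and the Hahn--Banach separation theorem supplies a linear $\phi$ with $\phi\ge 0$ on $\overline{C}$ and $\phi(v_0)<0$. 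Nonnegativity on $C$ makes (a positive multiple of) $\phi$ a functional $f$ on $M$; then $\phi(v_0)<0$ reads $f(x)<r_0f(y)$, and $f(y)\neq 0$ is automatic, since $f(y)=0$ would give $\phi(v_0)=f(x)\ge 0$.

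The genuine obstacle is the displayed equivalence, which is the resource-theoretic \emph{Vergleichsstellensatz} underlying \cite{fritz2017resource}: one must show simultaneously that the achievable region is governed by the closed cone $\overline{C}$ and that every limit point of $C$ is certified by actual inequalities in $M$ with only $o(\max(m,n))$ copies of the generator. Controlling the interplay of the regularization limit, the integer rounding, and the sublinear catalysis --- and doing so in a topology on the possibly infinite-dimensional space $W$ for which Hahn--Banach yields a functional that is nonnegative on all of $C$ rather than merely on a finite-dimensional slice --- is where the real work lies; the separation step and the bookkeeping of normalizations are then routine.
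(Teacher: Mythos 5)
First, a point of reference: the paper does not prove this statement at all --- it is imported verbatim as Theorem~8.24 of \cite{fritz2017resource}, and the only portion of your argument that appears in the text is the easy inequality, which is precisely the remark preceding the theorem ($x^ng^d\ge y^m$ gives $f(x)+\tfrac{d}{n}f(g)\ge\tfrac{m}{n}f(y)$, hence $f(x)\ge rf(y)$). Your rendering of that half is correct, and your overall strategy for the converse (linearize over the Grothendieck group, separate by Hahn--Banach, read the separating functional back as a monotone additive function) is indeed the right one in spirit; it is essentially how the cited result is established.

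However, as a proof the proposal has a genuine gap, and you name it yourself: the displayed equivalence ``$r_0$ is a regularized rate $\iff [x]-r_0[y]\in\overline{C}$'' is not a bridge to the theorem, it essentially \emph{is} the theorem restated geometrically, and the direction you actually need (membership in $\overline{C}$ forces $r_0$ to be a regularized rate, so that its failure permits separation) carries all of the difficulty. To close it one must (a) fix the topology on $W$ in which $\overline{C}$ is taken and verify that Hahn--Banach separation of a point from that closed convex cone is available in this infinite-dimensional setting (via the bipolar theorem in the weak topology, or the finest locally convex topology, together with control of the closure of a convex cone, which for cones need not be the sequential closure); and (b) show that a net in $C$ converging to $[x]-r_0[y]$ can be upgraded to honest monoid inequalities $x^ng^d\ge y^m$ with $d\le\delta\max(m,n)$ and $\tfrac{m}{n}$ near $r_0$, i.e.\ that the rounding and limiting errors can all be absorbed into a \emph{sublinear} number of generators. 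Step (b) is exactly the content of the chain of lemmas behind Theorem~8.24 in \cite{fritz2017resource} and is omitted here. Two smaller inaccuracies: the functional obtained from the separation is nonnegative on $\overline{C}\supseteq C$ and hence monotone for the original preorder, which should be stated rather than assumed; and compactness of the normalized functionals does \emph{not} guarantee that the infimum in \eqref{eq:regularizedrateformula} is attained, because the constraint $f(y)\neq 0$ is not closed (a minimizing sequence may converge to a functional vanishing on both $x$ and $y$) --- the paper only upgrades $\inf$ to $\min$ or $\max$ when the target is the generator. As written, the proposal is a correct reduction of the theorem to its own crux, not a proof of it.
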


\begin{remark}\label{rem:normalization}
Nonnegative multiples of functionals are also functionals, and the ratios in the regularized rate formula are not sensitive to such rescaling. The only functional that evaluates to $0$ on the generator is the zero functional, since $1\le x\le g^n$ implies $0=f(1)\le f(x)\le nf(g)$. Therefore in \eqref{eq:regularizedrateformula} we may restrict to functionals satisfying $f(g)=1$.
Normalized functionals form a convex set that is also compact with respect to the weak-* topology (the smallest topology that makes every evaluation map $\ev_m:f\mapsto f(m)$ continuous, where $m\in M$).
\end{remark}

We will make use of an alternative way of viewing regularized rates, provided by the following lemma:
\begin{lemma}\label{lem:alternativeregularizedrate}
Let $x,y\in M$. The following are equivalent:
\begin{enumerate}
\item\label{it:originalregularizedrate} $r$ is a regularized rate from $x$ to $y$,
\item\label{it:alternativeregularizedrate} $\forall\delta>0\exists n\in\positiveintegers: x^ng^{\lfloor\delta n\rfloor}\ge y^{\lceil rn\rceil}$.
\end{enumerate}
\end{lemma}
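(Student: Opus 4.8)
The plan is to prove the two implications separately, using throughout the standing assumption that $x\ge 1$ for every $x\in M$ — so that larger exponents give larger elements, $z^a\ge z^b$ whenever $a\ge b$ — together with the elementary integer inequalities $\lfloor Nt\rfloor\ge N\lfloor t\rfloor$ and $N\lceil t\rceil\ge\lceil Nt\rceil$ for $N\in\naturals$. A recurring device is the following scaling observation: raising a relation to the $N$-th power and applying these inequalities shows that if $x^n g^{\lfloor\delta n\rfloor}\ge y^{\lceil rn\rceil}$ holds for some $n$, then $x^{Nn}g^{\lfloor\delta Nn\rfloor}\ge y^{\lceil rNn\rceil}$ holds for every $N$. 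Thus in either direction a single witnessing exponent may be replaced by an arbitrarily large multiple of itself.

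For \ref{it:alternativeregularizedrate}$\Rightarrow$\ref{it:originalregularizedrate}, given $\delta>0$ and a neighbourhood $U$ of $r$, I would take the $n$ supplied by \ref{it:alternativeregularizedrate} and use the scaling observation to pass to $Nn$ with $N$ large enough that $\frac{\lceil rNn\rceil}{Nn}\in U$ (this ratio lies in $[r,r+\frac{1}{Nn})$, hence tends to $r$). Setting $m=\lceil rNn\rceil$ and $d=\lfloor\delta Nn\rfloor$, one has $\frac{m}{Nn}\in U$ and $d\le\delta Nn\le\delta\max(m,Nn)$, which is exactly a witness for \ref{it:originalregularizedrate}.

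For the converse \ref{it:originalregularizedrate}$\Rightarrow$\ref{it:alternativeregularizedrate}, the idea is to start from a witness whose fraction is very close to $r$ and reshape it. Fix $\delta>0$; since $g$ is a generator, fix $\ell$ with $g^\ell\ge y$. I would choose a small neighbourhood $U=(r-\eta,r+\eta)$ and small $\delta_0>0$ and apply \ref{it:originalregularizedrate} to obtain $\frac{m}{n}\in U$ and $d\le\delta_0\max(m,n)$ with $x^n g^d\ge y^m$, using the power trick to assume $n$ as large as needed. If $m\ge\lceil rn\rceil$ then $y^m\ge y^{\lceil rn\rceil}$ and only $d\le\lfloor\delta n\rfloor$ remains to be checked. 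The interesting case is $m<\lceil rn\rceil$, where I would pay for the shortfall by multiplying $x^n g^d\ge y^m$ with $g^{k\ell}\ge y^k$ for $k=\lceil rn\rceil-m$, obtaining $x^n g^{d+k\ell}\ge y^{\lceil rn\rceil}$.

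The crux is the bookkeeping that $d+k\ell\le\lfloor\delta n\rfloor$ for large $n$. Here $k=\lceil rn\rceil-m<\eta n+1$ because $\frac{m}{n}>r-\eta$, and $d\le\delta_0\max(m,n)\le\delta_0(r+\eta+1)n$, so the total $g$-exponent is at most $\bigl(\delta_0(r+\eta+1)+\eta\ell\bigr)n+\ell$. Choosing $\eta$ and $\delta_0$ small enough that the coefficient of $n$ falls below $\delta/2$, this is at most $\delta n$, hence at most $\lfloor\delta n\rfloor$ since it is an integer, once $n$ is large; enlarging the $g$-exponent to $\lfloor\delta n\rfloor$ (legitimate as $g\ge1$) yields $x^n g^{\lfloor\delta n\rfloor}\ge y^{\lceil rn\rceil}$. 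I expect this last direction to be the main obstacle: the definition of regularized rate may deliver a fraction $\frac{m}{n}$ slightly \emph{below} $r$, so one cannot simply discard copies of $y$ and must instead cover the deficit with generator copies while keeping the total sublinear — it is precisely the generator property, combined with the freedom to shrink $\eta$ and $\delta_0$, that makes this affordable. The degenerate case $r=0$ is immediate, since $y^{\lceil 0\cdot n\rceil}=1\le x^n g^{\lfloor\delta n\rfloor}$.
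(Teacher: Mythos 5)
Your proposal is correct and follows essentially the same route as the paper's proof: the same power-scaling trick in both directions, the same case split on whether $m\ge\lceil rn\rceil$, and the same device of covering the deficit $\lceil rn\rceil-m$ with generator copies $g^\ell\ge y$ while keeping the total $g$-exponent sublinear. The only differences are cosmetic (you leave $\eta$ and $\delta_0$ generic and choose them at the end, where the paper fixes $\delta'=\delta/(3(r+1))$ and the asymmetric neighbourhood $(r-\tfrac{\delta}{3c},r+1)$ up front).
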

\begin{proof}
Let $c\in\naturals$ such that $y\le g^c$.

\ref{it:originalregularizedrate}$\implies$\ref{it:alternativeregularizedrate}: Suppose that $r$ is a regularized rate and let $\delta>0$. With $\delta'=\frac{\delta}{3(r+1)}$ and the open set $U=(r-\frac{\delta}{3c},r+1)$ choose $m,n\in\positiveintegers$ such that $\frac{m}{n}\in U$ and $x^ng^{\lfloor\delta'\max(m,n)\rfloor}\ge y^m$ (possible by the definition of a regularized rate and using $g\ge 1$). For every $t\in\naturals$ we also have $x^{tn}g^{\lfloor \delta'\max(tm,tn)\rfloor}\ge x^{tn}g^{t\lfloor\delta'\max(m,n)\rfloor}\ge y^{tm}$, therefore we may assume that $n>\frac{3c}{\delta}$ (multiplying $m$ and $n$ with a large natural number, if necessary).

If $m\ge\lceil rn\rceil$, then
\begin{equation}
y^{\lceil rn\rceil}\le y^m\le x^ng^{\lfloor \delta'\max(m,n)\rfloor},
\end{equation}
where the exponent of $g$ is upper bounded by $\delta'(r+1)n\le\delta n/3\le\delta n$.

Otherwise the choices ensure that
\begin{equation}
y^{\lceil rn\rceil} = y^my^{\lceil rn\rceil-m}\le x^ng^{\lfloor\delta'\max(m,n)\rfloor+c(\lceil rn\rceil-m)}
\end{equation}
and the exponent of $g$ satisfies
\begin{equation}
\begin{split}
\frac{1}{n}\left[\lfloor\delta'\max(m,n)\rfloor+c(\lceil rn\rceil-m)\right]
 & \le \frac{1}{n}\left[\delta'(r+1)n+c(1+rn-(r-\frac{\delta}{3c})n)\right]  \\
 & =  \delta'(r+1)+\frac{c}{n}+c\frac{\delta}{3c}\le\delta.
\end{split}
\end{equation}
Since the exponent is an integer upper bounded by $\delta n$, it is at most $\lfloor\delta n\rfloor$. By $g\ge 1$ we can replace the exponent with the upper bound $\lfloor\delta n\rfloor$ to get $x^ng^{\lfloor\delta n\rfloor}\ge y^{\lceil rn\rceil}$.

\ref{it:alternativeregularizedrate}$\implies$\ref{it:originalregularizedrate}: Suppose that \ref{it:alternativeregularizedrate} holds and let $\delta>0$ and $U$ a neighbourhood of $r$. Choose $n\ge 1$ such that $x^ng^{\lfloor\delta n\rfloor }\ge y^{\lceil rn\rceil}$. Then for every $t\in\naturals$ the inequality
\begin{equation}
x^{tn}g^{\lfloor\delta tn\rfloor}\ge x^{tn}g^{t\lfloor\delta n\rfloor}\ge y^{t\lceil rn\rceil}\ge y^{\lceil rtn\rceil}
\end{equation}
also holds. Therefore we can choose $n$ so large that $\frac{\lceil rn\rceil}{n}\in U$. Since $\lfloor\delta n\rfloor\le\lfloor\delta n\max(n,\lceil rn\rceil)\rfloor$, we conclude that $r$ is a regularized rate.
\end{proof}

Next we begin the construction of our preordered commutative monoid by defining an equivalence relation on $k$-partite states. Let $\rho\in\states(\mathcal{H})$ and $\sigma\in\states(\mathcal{K})$. We say that these states are equivalent and write $\rho\sim\sigma$ if there are unitaries $U_j\in U(\mathcal{H}_j\oplus\mathcal{K}_j)$ for all $j=1,\ldots,k$ such that
\begin{equation}
(U_1\otimes\cdots\otimes U_k)(\rho\oplus 0)(U_1\otimes\cdots\otimes U_k)^*=0\oplus\sigma,
\end{equation}
where $0$ on the left (right) hand side is the zero operator on $\mathcal{K}$ ($\mathcal{H}$), and $\rho\oplus 0$ and $\sigma\oplus 0$ are regarded as operators on $(\mathcal{H}_1\oplus\mathcal{K}_1)\otimes\cdots\otimes(\mathcal{H}_k\oplus\mathcal{K}_k)$, supported on the subspace $\mathcal{H}\oplus\mathcal{K}$. In simpler terms, two states are equivalent if they are the same up enlarging the local Hilbert spaces and to unitary equivalence.

It is clear that every state is equivalent to some state on $\complexes^d\otimes\cdots\otimes\complexes^d$ when the local dimension $d$ is sufficiently large. To avoid set-theoretical issues, we therefore define $M$ to be
\begin{equation}
M=\left(\bigcup_{d=1}^\infty\states(\complexes^d\otimes\cdots\otimes\complexes^d)\right)/\sim.
\end{equation}
The tensor product of states descends to a well-defined operation on $M$, which is associative, commutative, and has a unit $1$, the equivalence class of separable pure states. This operation turns $M$ into a commutative monoid.

While working with such equivalence classes is necessary for obtaining the required algebraic structure, we do not wish to carry the notational burden that comes with distinguishing a state from its equivalence class. In addition, it offers more flexibility to consider states on finite dimensional Hilbert spaces that are not of the form $\complexes^d\otimes\cdots\otimes\complexes^d$, and it is safe to do so as long as every definition respects the relation of equivalence. For this reason, we will regard states $\rho\in\states(\mathcal{H})$ as elements of $M$, keeping in mind that each state corresponds to a unique equivalence class. In the same spirit, we will use the notation $\otimes$ for the operation on $M$.

The next ingredient that we need is a preorder on $M$ which is compatible with the tensor product. We make the following definition.
\begin{definition}
Let $\rho$ and $\sigma$ be $k$-partite states. We declare $\rho\ge\sigma$ if
\begin{equation}
\limsup_{n\to\infty}\inf\setbuild{\epsilon\in\nonnegativereals}{\rho^{\otimes n}\loccto[\epsilon]\sigma^{\otimes n}}=0.
\end{equation}
\end{definition}
It is straightforward to verify that $\ge$ is well-defined on $M$, and gives a reflexive and transitive relation that is compatible with the multiplication. Clearly $\rho\loccto\sigma$ implies $\rho\ge\sigma$.

As a generator we may choose the GHZ state. It is indeed a generator: a GHZ state can be transformed into an EPR pair between any pair of the parties, which can then be used to teleport any state locally prepared by one party, given sufficient supply of the GHZ states. Therefore for every $\rho$ and sufficiently large $n$ we have $\GHZ^{\otimes n}\loccto\rho$, i.e. $\GHZ^{\otimes n}\ge\rho$. A more careful analysis of this idea shows $\GHZ_{\dim\mathcal{H}}\ge\rho$ if $\rho\in\states(\mathcal{H})$ (or even $\dim\mathcal{H}-\max_j\dim\mathcal{H}_j$).

The following definition gives a name to the set of normalized functionals on $M$ (by Remark~\ref{rem:normalization} these are the only ones that we need to consider).
\begin{definition}\label{def:normalizedfunctionals}
$\functionals{k}$ is the set of maps $E$ from $k$-partite states to $\reals$ which satisfy for all $\rho,\sigma$
\begin{enumerate}
\item $E(\GHZ)=1$,
\item $E(\rho\otimes\sigma)=E(\rho)+E(\sigma)$,
\item if $\limsup_{n\to\infty}\inf\setbuild{\epsilon\in\nonnegativereals}{\rho^{\otimes n}\loccto[\epsilon]\sigma^{\otimes n}}$ then $E(\rho)\ge E(\sigma)$.
\end{enumerate}
\end{definition}
A basic consequence of these properties and the relation $\GHZ_{\dim\mathcal{H}}\ge\rho$ for a state $\rho\in\states(\mathcal{H})$ is that any element $E\in\functionals{k}$ satisfies $0\le E(\rho)\le\log\dim\mathcal{H}$. Known elements of $\functionals{k}$ include the squashed entanglement and its multipartite generalizations \cite{christandl2004squashed,alicki2004continuity,yang2009squashed} and the conditional entanglement of mutual information \cite{yang2008additive}. The relative entropy of entanglement is not in $\functionals{k}$, because it is not additive \cite{vollbrecht2001entanglement}, but it is asymptotically continuous \cite{donald1999continuity}. On the other hand, the regularized relative entropy of entanglement is additive (by definition) and asymptotically continuous \cite[Proposition 3.23]{christandl2006structure}, and it is an open question if it is fully additive, a property which would make it an element of $\functionals{k}$.
\begin{remark}
The defining properties of the normalized functionals are among the strongest axioms considered in the theory of entanglement measures. In particular, the are known to imply convexity (see e.g. \cite[Proposition 3.10]{christandl2006structure}), monotonicity on average, i.e.
\begin{equation}
\left(\rho\loccto\sum_{x\in\mathcal{X}}P(x)\ketbra{x}{x}\otimes\sigma_x\right)\implies E(\rho)\ge\sum_{x\in\mathcal{X}}P(x)E(\sigma_x)
\end{equation}
where $\ketbra{x}{x}$ is a classical ``flag'' state available to all parties (equivalently: one party), and the condition
\begin{equation}
E\left(\sum_{x\in\mathcal{X}}P(x)\ketbra{x}{x}\otimes\sigma_x\right)=\sum_{x\in\mathcal{X}}P(x)E(\sigma_x).
\end{equation}
\end{remark}

At this point we can conclude that it is possible to define regularized rates on $M$, and their supremum is characterized by Theorem~\ref{thm:regularizedratefromfunctionals} in terms of normalized functionals. To connect to the problem set out at the beginning of this section, we show that regularized rates on $M$ and achievable rates in the sense of Definition~\ref{def:achievablerate} are the same.
\begin{proposition}
Let $\rho,\sigma$ be $k$-parite states and $r\in\nonnegativereals$. The following are equivalent:
\begin{enumerate}
\item\label{it:regularized} $r$ is a regularized rate from $\rho$ to $\sigma$,
\item\label{it:achievable} $r$ is an achievable rate, i.e. $\loccqrate{\rho}{\sigma}\ge r$.
\end{enumerate}
\end{proposition}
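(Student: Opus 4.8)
The plan is to route both conditions through the alternative description of regularized rates in Lemma~\ref{lem:alternativeregularizedrate} and then to match its ``single-shot with sublinear assistance'' form against the definition of achievability; the only discrepancy is a reshuffling of limits, which can be controlled using the generator property of $\GHZ$. Recall that, since $\GHZ$ is a generator, there is a constant $c\in\naturals$ (depending only on $\sigma$) with $\GHZ^{\otimes c}\loccto\sigma$ \emph{exactly}, so a bounded number of copies of $\sigma$ can be manufactured from $O(1)$ copies of $\GHZ$ at no cost in error. Unfolding the monoid relation, condition \ref{it:regularized} is, via Lemma~\ref{lem:alternativeregularizedrate}\ref{it:alternativeregularizedrate}, the statement that for every $\delta>0$ there is an $n$ with
\[
\limsup_{t\to\infty}\inf\setbuild{\epsilon\in\nonnegativereals}{\rho^{\otimes nt}\otimes\GHZ^{\otimes t\lfloor\delta n\rfloor}\loccto[\epsilon]\sigma^{\otimes t\lceil rn\rceil}}=0,
\]
whereas \ref{it:achievable} is the same kind of statement but with a single running index $N\to\infty$ and the rounded quantities $\lfloor\delta N\rfloor$, $\lceil rN\rceil$.

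For \ref{it:achievable}$\implies$\ref{it:regularized}, I would fix $\delta>0$, choose an auxiliary $\delta'<\delta$, and instantiate achievability along the subsequence $N=nt$, obtaining transformations $\rho^{\otimes nt}\otimes\GHZ^{\otimes\lfloor\delta' nt\rfloor}\loccto[\epsilon_{nt}]\sigma^{\otimes\lceil rnt\rceil}$ with $\epsilon_{nt}\to 0$ as $t\to\infty$. The target $\sigma^{\otimes t\lceil rn\rceil}$ demands up to $t\lceil rn\rceil-\lceil rnt\rceil\le t$ additional copies of $\sigma$, which I would supply exactly from at most $ct$ further copies of $\GHZ$. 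The key point is that the two GHZ budgets both scale linearly in $t$: the number of $\GHZ$ used is at most $\delta' nt+ct$, while the budget is $t\lfloor\delta n\rfloor\ge \delta nt-t$, so the needed inequality $\delta' n+c+1\le\delta n$ holds once $n$ is chosen large enough (with $\delta'<\delta$ fixed), and excess $\GHZ$ is discarded. Subadditivity of the purified distance keeps the total error equal to $\epsilon_{nt}\to0$, establishing the displayed $\limsup$ and hence \ref{it:regularized} through the lemma.

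For the converse \ref{it:regularized}$\implies$\ref{it:achievable}, given the target slack $\delta''$ in the definition of achievability I would apply the lemma with some $\delta<\delta''$ to obtain an $n$ and the associated block transformation. For arbitrary $N$ I would write $N=nt+s$ with $0\le s<n$, discard the leftover $\rho^{\otimes s}$ by partial trace, run the block transformation on $\rho^{\otimes nt}$ (at error tending to $0$ since $t=\lfloor N/n\rfloor\to\infty$), and correct the now \emph{bounded} deficit $\lceil rN\rceil-t\lceil rn\rceil\le rs+1=O(1)$ in the number of $\sigma$'s using $O(1)$ extra copies of $\GHZ$. Again subadditivity of $\purifieddistance$ controls the total error, and the $\GHZ$ count $t\lfloor\delta n\rfloor+O(1)\approx\delta N$ fits inside $\lfloor\delta'' N\rfloor$ for large $N$ because $\delta<\delta''$.

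I expect the main obstacle to be the balancing in the first direction: unlike the converse, the rounding mismatch between $t\lceil rn\rceil$ and $\lceil rnt\rceil$ there produces a deficit growing linearly in $t$, so it cannot simply be treated as lower-order noise. The resolution is that this linear deficit is paid for out of a $\GHZ$ budget that is itself linear in $t$, and the per-block slack $(\delta-\delta')n$ can be made to exceed the constant per-copy cost $c$ by taking $n$ large. It is precisely the freedom to choose $n$ in Lemma~\ref{lem:alternativeregularizedrate}\ref{it:alternativeregularizedrate}, together with the exactness of $\GHZ^{\otimes c}\loccto\sigma$, that makes the two notions of rate coincide.
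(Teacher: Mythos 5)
Your proposal is correct and follows essentially the same route as the paper: both directions pass through Lemma~\ref{lem:alternativeregularizedrate}, compare the single-index sequence $N$ with the blocked sequence $nt$, and absorb the $O(t)$ (respectively $O(1)$) rounding deficit in copies of $\sigma$ into the sublinear $\GHZ$ budget via the exact transformation $\GHZ^{\otimes c}\loccto\sigma$, with the slack $(\delta-\delta')n>c$ secured by taking $n$ large. The only differences from the paper's argument are the specific choices of constants (the paper takes $\delta'=\delta/2$ or $\delta/3$ and $n=\lceil 3c/\delta\rceil$), which are immaterial.
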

\begin{proof}
Let $c\in\naturals$ such that $\GHZ^{\otimes c}\loccto\sigma$.

\ref{it:regularized}$\implies$\ref{it:achievable}: Suppose that $r$ is a regularized rate and let $\delta>0$. Let $\delta'=\delta/2$. By Lemma~\ref{lem:alternativeregularizedrate}, there is an $n\ge 1$ such that $\rho^{\otimes n}\otimes\GHZ^{\otimes\lfloor\delta' n\rfloor}\ge\sigma^{\lceil rn\rceil}$. In detail,
\begin{equation}
\limsup_{t\to\infty}\inf\setbuild{\epsilon\in\nonnegativereals}{\rho^{\otimes tn}\otimes\GHZ^{\otimes t\lfloor\delta' n\rfloor}\loccto[\epsilon]\sigma^{t\otimes\lceil rn\rceil}}=0.
\end{equation}
For $N\in\naturals$ let $t=\lfloor\frac{N}{n}\rfloor$. Then $t\to\infty$ as $N\to\infty$, therefore for any $\epsilon>0$ and sufficiently large $N$ we have
\begin{equation}\label{eq:rhodeltaprimeapproximatelytosigma}
\rho^{\otimes tn}\otimes\GHZ^{\otimes t\lfloor\delta' n\rfloor}\loccto[\epsilon]\sigma^{\otimes t\lceil rn\rceil}.
\end{equation}

If $t\lceil rn\rceil\ge\lceil rN\rceil$ then also $\rho^{\otimes tn}\otimes\GHZ^{\otimes t\lfloor\delta' n\rfloor}\loccto[\epsilon]\sigma^{\lceil rN\rceil}$, and here the number of GHZ states is at most $t\delta' n=t\delta n/2\le N\delta$. Otherwise we combine \eqref{eq:rhodeltaprimeapproximatelytosigma} with
\begin{equation}
\GHZ^{\otimes c(\lceil rN\rceil-t\lceil rn\rceil)}\loccto\sigma^{\otimes\lceil rN\rceil-t\lceil rn\rceil},
\end{equation}
and get (using $N\ge tn$)
\begin{equation}
\rho^{\otimes N}\otimes\GHZ^{\otimes t\lfloor\delta' n\rfloor+c(\lceil rN\rceil-t\lceil rn\rceil)}\loccto[\epsilon]\sigma^{\otimes \lceil rN\rceil}.
\end{equation}
The required number of GHZ states satisfies
\begin{equation}
\begin{split}
t\lfloor\delta' n\rfloor+c(\lceil rN\rceil-t\lceil rn\rceil)
 & = \left\lfloor\frac{N}{n}\right\rfloor\lfloor\delta' n\rfloor+c\left(\lceil rN\rceil-\left\lfloor\frac{N}{n}\right\rfloor\lceil rn\rceil\right)  \\
 & \le \frac{N}{n}\delta' n+c\left(1+rN-\left(\frac{N}{n}-1\right) rn\right)  \\
 & = \delta' N+c(1+rn)\le\delta N
\end{split}
\end{equation}
if $N\ge c(1+rn)/\delta'$. As $\delta$ and $\epsilon$ can be arbitrarily small, $r$ is an achievable rate.

\ref{it:achievable}$\implies$\ref{it:regularized}: Let $r$ be an achievable rate and let $\delta>0$. Choose $\delta'=\delta/3$ and let $n=\lceil 3c/\delta\rceil$. Since $r$ is achievable, for all $\epsilon>0$ and sufficiently large $n'$ the relation
\begin{equation}
\rho^{\otimes n'}\otimes\GHZ^{\otimes \lfloor\delta' n'\rfloor}\loccto[\epsilon]\sigma^{\otimes\lceil rn'\rceil}
\end{equation}
holds. Specializing to $n'=tn$ with large $t$ and using $\lceil rtn\rceil\le t\lceil rn\rceil\le\lceil rtn\rceil+t$, we have
\begin{equation}
\rho^{\otimes tn}\otimes\GHZ^{\otimes \lfloor\delta'tn\rfloor+c(t\lceil rn\rceil-\lceil rtn\rceil)}\loccto[\epsilon]\sigma^{\otimes\lceil rtn\rceil}\otimes\sigma^{\otimes(t\lceil rn\rceil-\lceil rtn\rceil)}=\sigma^{\otimes t\lceil rn\rceil},
\end{equation}
where the number of GHZ states satisfies
\begin{equation}
\lfloor\delta'tn\rfloor+c(t\lceil rn\rceil-\lceil rtn\rceil)
 \le t(\delta'n+c) \le t\lfloor\delta n\rfloor.
\end{equation}
This means that $\rho^{\otimes n}\otimes\GHZ^{\otimes\lfloor\delta n\rfloor}\ge\sigma^{\otimes \lceil rn\rceil}$. Since $\delta$ was arbitrarily small, $r$ is a regularized rate by Lemma~\ref{lem:alternativeregularizedrate}.
\end{proof}

To summarize, the results above identify the regularized rates for $M$ as achieveble rates for asymptotic LOCC transformations assisted by a sublinear number of GHZ states (equivalently: sublinear qubits of quantum communication), while Theorem~\ref{thm:regularizedratefromfunctionals} characterizes them in terms of functionals. Thus we have proved the following theorem:
\begin{theorem}\label{thm:loccqrateinf}
For all $\rho,\sigma$ we have
\begin{equation}
\loccqrate{\rho}{\sigma}=\inf_{\substack{E\in\functionals{k}  \\  E(\sigma)\neq 0}}\frac{E(\rho)}{E(\sigma)}.
\end{equation}
\end{theorem}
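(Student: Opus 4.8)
The plan is to deduce the statement by chaining together the three preceding results, since the substantive content has already been established. First I would invoke the Proposition immediately above, which shows that a number $r\in\nonnegativereals$ is an achievable rate in the sense of Definition~\ref{def:achievablerate} exactly when it is a regularized rate from $\rho$ to $\sigma$ in the monoid $M$. Because the two notions pick out the same subset of $\nonnegativereals$, their suprema agree; and since $\loccqrate{\rho}{\sigma}$ is by definition the supremum of achievable rates, it equals the supremum of regularized rates from $\rho$ to $\sigma$.

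Next I would apply Theorem~\ref{thm:regularizedratefromfunctionals}, the specialization of \cite[8.24. Theorem]{fritz2017resource} to our monoid, which expresses that supremum as the infimum in \eqref{eq:regularizedrateformula}, namely $\inf_f \frac{f(\rho)}{f(\sigma)}$ with $f$ ranging over all functionals on $M$ satisfying $f(\sigma)\neq 0$. What remains is to match this infimum with the one in the statement, where $E$ ranges over the normalized functionals $\functionals{k}$ with $E(\sigma)\neq 0$. This is the only step requiring care, and it is supplied by Remark~\ref{rem:normalization}: if $f$ is a functional with $f(\sigma)\neq 0$, then $f(\GHZ)>0$ (were $f(\GHZ)=0$, the bounds $0\le f(x)\le nf(\GHZ)$ coming from $1\le x\le\GHZ^n$ would force $f$ to vanish identically, contradicting $f(\sigma)\neq 0$), so $E:=f/f(\GHZ)$ is well defined and lies in $\functionals{k}$, with $\frac{E(\rho)}{E(\sigma)}=\frac{f(\rho)}{f(\sigma)}$ and $E(\sigma)\neq 0$. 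Conversely, each element of $\functionals{k}$ with $E(\sigma)\neq 0$ is itself such a functional. Hence the two index sets produce the same ratios, the infima coincide, and this is exactly the asserted identity.

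I do not anticipate a genuine obstacle here: the difficulty of the theorem resides entirely in its two inputs --- the Shannon-theoretic equivalence of achievable and regularized rates carried out in the Proposition, and the abstract characterization of rates by functionals imported from the resource-theoretic framework. The present argument is the bookkeeping that glues them together, and the only subtlety is the passage from arbitrary functionals to normalized ones described above.
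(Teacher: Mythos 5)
Your proposal is correct and follows exactly the paper's own route: the paper proves this theorem precisely by combining the preceding Proposition (achievable rates coincide with regularized rates on $M$) with Theorem~\ref{thm:regularizedratefromfunctionals}, and invokes Remark~\ref{rem:normalization} to restrict to normalized functionals, which is the set $\functionals{k}$ by Definition~\ref{def:normalizedfunctionals}. Your extra care in checking that $f(\GHZ)>0$ whenever $f(\sigma)\neq 0$, so that normalization is legitimate, is exactly the content of that remark.
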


As a simple application, we obtain the following characterization of the entanglement cost:
\begin{equation}\label{eq:ecmax}
E_C(\rho)=\frac{1}{\loccqrate{\EPR}{\rho}}=\max_{E\in\functionals{2}}E(\rho),
\end{equation}
where writing maximum is justified by compactness of $\functionals{2}$ and continuity of $E\mapsto E(\rho)$. Similarly, when $E_D(\rho)>0$, we have
\begin{equation}\label{eq:edmin}
E_D(\rho)=\loccqrate{\rho}{\EPR}=\min_{E\in\functionals{2}}E(\rho).
\end{equation}
These statements can be seen as a strong duality type extension of the uniqueness theorem \cite{horodecki2000limits,donald2002uniqueness}. The extension is not trivial, as neither $E_C$ nor $E_D$ are known to be an element of $\functionals{2}$. It is not known whether $E_C$ is fully additive \cite{brandao2007remarks} (although it is convex), and there is some evidence that $E_D$ is not fully additive and not convex \cite{shor2001nonadditivity}. We note that \eqref{eq:ecmax} expresses $E_C$ as a maximum of convex functions, which is also convex, while \eqref{eq:edmin} expresses $E_D$ (at least when it is not zero) as a minimum of convex functions, which is in general not convex.

For more than two parties, similar formulas hold with the EPR pair replaced with the GHZ state in \eqref{eq:ecmax} and \eqref{eq:edmin} (or any other generator, if the normalization is changed accordingly). It is easy to see that $\loccqrate{\rho}{\GHZ}$ is not additive and not convex, e.g. the tripartie pure states $\EPR_{AB}\otimes\ket{000}$ and $\EPR_{BC}\otimes\ket{111}$ cannot be transformed into GHZ states, but any nontrivial convex combination of them as well as their product is distillable.

As mentioned in the introduction, the characterization in Theorem~\ref{thm:loccqrateinf} is not entirely satisfactory because one of the defining properties of the functionals involves asymptotic transformations, which makes it difficult to decide if an entanglement measure belongs to $\functionals{k}$ or not. In practice, one shows instead monotonicity under single-shot LOCC transformations (which is also necessary) and asymptotic continuity, a particular type of continuity estimate depending logarithmically on the dimension of the Hilbert space. As this condition is only meaningful for quantities defined on all possible (finite) Hilbert space dimensions, it should be considered as a property of a function on
\begin{equation}\label{eq:allstatespaces}
\bigcup_{d_1,\ldots,d_k=1}^\infty\states(\complexes^{d_1}\otimes\cdots\otimes\complexes^{d_k}).
\end{equation}
For simplicity, we regard this set as a metric space in the following way: when $\rho,\sigma$ are states on the same Hilbert space then their distance is the purified distance, while if they live on different Hilbert spaces then we define their distance to be $1$ (as if they had orthogonal supports, although any positive constant would do). In the following $\dim\mathcal{H}$ will denote the function on the disjoint union \eqref{eq:allstatespaces} that takes the value $d_1d_2\cdots d_k$ on $\states(\complexes^{d_1}\otimes\cdots\otimes\complexes^{d_k})$.
\begin{definition}\label{def:asymptoticcontinuity}
A function $f:\bigcup_{d_1,\ldots,d_k=1}^\infty\states(\complexes^{d_1}\otimes\cdots\otimes\complexes^{d_k})\to\reals$ is \emph{asymptotically continuous} if
\begin{equation}
\frac{f}{1+\log\dim\mathcal{H}}
\end{equation}
is uniformly continuous.
\end{definition}
Any additive function satisfying this definition that is at the same time monotone under LOCC channels is also monotone under $\le$. The same definition can be used also if $f$ is defined on a subspace of \eqref{eq:allstatespaces}. In particular, we will study functions defined on pure states in Section~\ref{sec:pure}.
\begin{remark}
We note that there are slight variations in the literature on how asymptotic continuity is defined. Some authors require an estimate of the form $|f(\rho)-f(\sigma)|\le C_1\norm[1]{\rho-\sigma}\log\dim\mathcal{H}+C_2$ for some $C_1,C_2>0$ (as suggested by the Fannes inequality \cite{fannes1973continuity}), while others require $|E(\rho)-E(\sigma)|\le o(1)(1+\log\dim\mathcal{H})$ \cite{donald2002uniqueness} or $|E(\rho)-E(\sigma)|\le C\norm[1]{\rho-\sigma}\log\dim\mathcal{H}+o(1)$ \cite{synak2006asymptotic}, where $o(1)$ is any function that vanishes as $\norm[1]{\rho-\sigma}\to 0$. Our formulation is equivalent to the second one, underlining that asymptotic continuity is not a metric property, but depends only on the uniform structure. In particular, our condition does not change if we replace the purified distance with the trace norm distance, since they determine the same uniform structure \cite{fuchs1999cryptographic}. Nevertheless, as we show below, it \emph{does} imply an explicit continuity estimate on pure states.
\end{remark}

We stress that asymptotic continuity is only a sufficient condition for finding entanglement measures that are relevant in the asymptotic limit. It is possible for an entanglement measure to be not asymptotically continuous, but still provide an upper bound on certain rates, as the example of the logarithmic negativity shows \cite{vidal2002computable}. Theorem~\ref{thm:loccqrateinf} would become considerably stronger if one could show that every element of $\functionals{k}$ is asymptotically continuous in the sense of Definition~\ref{def:asymptoticcontinuity}.

\section{Entanglement measures on pure states}\label{sec:pure}

In this section we restrict our attention to pure states. The constructions from Section~\ref{sec:general} can also be applied to this case, resulting in a submonoid $M^{\textnormal{pure}}$ of $M$ consisting of (equivalence classes of) pure $k$-partite states. The set of normalized functionals $M^{\textnormal{pure}}\to\reals$ will be denoted by $\purefunctionals{k}$. As a notational simplification, we will write unit vectors $\varphi,\psi,\ldots$ as the argument of functionals on pure states with the understanding that $E(\varphi)\equiv E(\ketbra{\varphi}{\varphi})$.

Our aim is to prove equivalent characterizations of the elements of $\purefunctionals{k}$, emphasizing properties that can be verified without considering transformations in the asymptotic limit, i.e. involve only single-copy conditions. In the argument the direct sum operation plays a central role. We start with an inequality that is valid also for certain non-asymptotic measures.
\begin{proposition}\label{prop:superpositionlowerbound}
Let $E:M^\textnormal{pure}\to\reals$ be fully additive, monotone on average, and normalized to $E(\GHZ)=1$. Then for all $k$-partite state vectors $\varphi,\psi$ and $p\in[0,1]$ the inequality
\begin{equation}
E(\sqrt{p}\varphi\oplus\sqrt{1-p}\psi)\ge pE(\varphi)+(1-p)E(\psi)+h(p)
\end{equation}
holds.
\end{proposition}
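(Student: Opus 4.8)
The plan is to lift the statement to many copies, perform a \emph{type} measurement by one party, and invoke the monotonicity-on-average hypothesis. Write $\Phi=\sqrt{p}\varphi\oplus\sqrt{1-p}\psi$ and fix a large $n$. Expanding $\Phi^{\otimes n}$ over the direct-sum structure yields an orthogonal decomposition indexed by sequences $s\in\{0,1\}^n$: the branch $\chi_s$ of $s$ is locally supported where copy $i$ lies in the $\varphi$-subspace (if $s_i=0$) or the $\psi$-subspace (if $s_i=1$), it equals $\varphi$ and $\psi$ on the corresponding copies, and it carries amplitude $p^{(n-|s|)/2}(1-p)^{|s|/2}$. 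Since these branches are locally orthogonal and perfectly correlated across the $k$ parties, one party can measure the Hamming weight $\hat w=\sum_{i=1}^n P^{(i)}$ counting the copies in the $\psi$-subspace (a local observable), and broadcast the outcome. This is an LOCC map
\begin{equation}
\Phi^{\otimes n}\loccto\sum_{w=0}^n P(w)\,\ketbra{w}{w}\otimes\Phi_w,\qquad P(w)=\binom{n}{w}p^{n-w}(1-p)^w,
\end{equation}
where $\Phi_w=\binom{n}{w}^{-1/2}\sum_{s:\,|s|=w}\chi_s$ is the \emph{uniform} superposition over weight-$w$ branches (the amplitudes are equal within a type, so they flatten after renormalization). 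By monotonicity on average and full additivity, $nE(\Phi)=E(\Phi^{\otimes n})\ge\sum_w P(w)E(\Phi_w)$.

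The crucial point is that I would measure \emph{only} the type $w$, never the full sequence $s$, so that the $\binom{n}{w}$-fold coherent superposition survives inside $\Phi_w$. I then claim $\Phi_w$ is local-unitary equivalent to $\GHZ_{\binom{n}{w}}\otimes\varphi^{\otimes(n-w)}\otimes\psi^{\otimes w}$. Indeed, the occupation pattern $s$ is recorded identically in every party's system, so each party can apply the local unitary that copies $s$ into a local ancilla and then permutes its $n$ payload slots, with a common sorting convention, so as to bring all $\varphi$-copies first; this sends the branch of $s$ to $\ket{s}\otimes\varphi^{\otimes(n-w)}\otimes\psi^{\otimes w}$. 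The payload now factors out while the retained records $\{\ket{s}:|s|=w\}$ form a shared orthogonal flag, i.e.\ a flat $\GHZ_{\binom{n}{w}}$. Since $\GHZ_{rs}=\GHZ_r\otimes\GHZ_s$, full additivity and $E(\GHZ)=1$ give $E(\GHZ_{2^l})=l$, and monotonicity pins $E(\GHZ_m)=\log m$ for all $m$ by sandwiching $\GHZ_m$ between powers of two in the asymptotic preorder. Full additivity together with monotonicity under the (deterministic) unscrambling then yields
\begin{equation}
E(\Phi_w)\ge\log\binom{n}{w}+(n-w)E(\varphi)+wE(\psi).
\end{equation}

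Combining the two displays and using the binomial moments $\sum_w P(w)(n-w)=np$ and $\sum_w P(w)\,w=n(1-p)$ gives, after dividing by $n$,
\begin{equation}
E(\Phi)\ge pE(\varphi)+(1-p)E(\psi)+\frac1n\sum_{w=0}^n P(w)\log\binom{n}{w}.
\end{equation}
Because $0\le\tfrac1n\log\binom{n}{w}\le 1$ and $w/n$ concentrates at $1-p$ under $P$, Stirling's estimate $\tfrac1n\log\binom{n}{w}\to h(w/n)$ together with bounded convergence give $\tfrac1n\sum_w P(w)\log\binom{n}{w}\to h(1-p)=h(p)$. As the left-hand side is independent of $n$, letting $n\to\infty$ produces the asserted inequality (the cases $p\in\{0,1\}$ being trivial).

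The step I expect to be the main obstacle is the local-unitary unscrambling $\Phi_w\cong\GHZ_{\binom{n}{w}}\otimes\varphi^{\otimes(n-w)}\otimes\psi^{\otimes w}$: the naive ``sort the copies'' map is \emph{not} unitary, since it forgets $s$ and collapses distinct branches onto one, so one must explicitly retain the permutation record and verify that this record is genuinely local and identical for all parties — which is exactly why direct sums of \emph{locally} orthogonal vectors are the right notion. The remaining delicate choice is conceptual rather than computational: measuring only the type, and not the full sequence, is what keeps the $\log\binom{n}{w}\approx n\,h(p)$ worth of entanglement inside the post-measurement states rather than discarding it into the flag, and this is precisely the mechanism that generates the extra $h(p)$ that a single-copy argument cannot see.
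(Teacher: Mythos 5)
Your proof is correct and follows essentially the same route as the paper's: the paper likewise writes $(\sqrt{p}\varphi\oplus\sqrt{1-p}\psi)^{\otimes n}$, up to local unitaries, as $\bigoplus_{m}\sqrt{\binom{n}{m}p^m(1-p)^{n-m}}\,\varphi^{\otimes m}\otimes\psi^{\otimes(n-m)}\otimes\GHZ_{\binom{n}{m}}$, has every party perform the type measurement, applies monotonicity on average together with full additivity, and passes to the limit by the method of types; your extra care with the local-unitary sorting-with-record and with keeping the $\GHZ_{\binom{n}{w}}$ coherence is exactly the content of the paper's ``up to local unitary transformations'' step. One small point: justify $E(\GHZ_m)=\log m$ by sandwiching $\GHZ_{m}^{\otimes l}=\GHZ_{m^l}$ between powers of $\GHZ_2$ under \emph{single-shot deterministic} LOCC (which monotonicity on average does give you), rather than via the asymptotic preorder, since monotonicity under asymptotic transformations is not among the hypotheses of this proposition.
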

\begin{proof}
Up to local unitary transformations, the $n$th tensor power of the direct sum can be written as
\begin{equation}
(\sqrt{p}\varphi\oplus\sqrt{1-p}\psi)^{\otimes n}=\bigoplus_{m=0}^n\sqrt{\binom{n}{m}p^m(1-p)^{n-m}}\varphi^{\otimes m}\otimes\psi^{\otimes(n-m)}\otimes\GHZ_{\binom{n}{m}}.
\end{equation}
The direct sum over $m$ determines a decomposition of the local Hilbert spaces into $n+1$ pairwise orthogonal subspaces. If every party performs the corresponding measurement, then the results will always be identical. With probability $\binom{n}{m}p^m(1-p)^{n-m}$ they obtain the outcome $m$ and the resulting state is $\varphi^{\otimes m}\otimes\psi^{\otimes(n-m)}\otimes\GHZ_{\binom{n}{m}}$. Using that $E$ is fully additive and monotone on average, we obtain
\begin{equation}
\begin{split}
nE(\sqrt{p}\varphi\oplus\sqrt{1-p}\psi)
 & = E((\sqrt{p}\varphi\oplus\sqrt{1-p}\psi)^{\otimes n})  \\
 & \ge \sum_{m=0}^n\binom{n}{m}p^m(1-p)^{n-m}E\left(\varphi^{\otimes m}\otimes\psi^{\otimes(n-m)}\otimes\GHZ_{\binom{n}{m}}\right)  \\
 & = \sum_{m=0}^n\binom{n}{m}p^m(1-p)^{n-m}\left[mE(\varphi)+(n-m)E(\psi)+E(\GHZ_{\binom{n}{m}})\right]  \\
 & = npE(\varphi)+n(1-p)E(\psi)+\sum_{m=0}^n\binom{n}{m}p^m(1-p)^{n-m}\log\binom{n}{m},
\end{split}
\end{equation}
using that the expected value of a binomial distribution with parameters $n$, $p$ is $np$. Divide by $n$ and let $n\to\infty$:
\begin{equation}
\begin{split}
E(\sqrt{p}\varphi\oplus\sqrt{1-p}\psi)
 & \ge \lim_{n\to\infty}\left[pE(\varphi)+(1-p)E(\psi)+\frac{1}{n}\sum_{m=0}^n\binom{n}{m}p^m(1-p)^{n-m}\log\binom{n}{m}\right]  \\
 & = pE(\varphi)+(1-p)E(\psi)+h(p).
\end{split}
\end{equation}
The last equality follows from a standard argument in the method of types \cite{csiszar2011information}. More specifically, it can be proved using the estimates $nh(m/n)-2\log(n+1)\le\log\binom{n}{m}\le nh(m/n)$ and the law of large numbers.
\end{proof}

In the following theorem we list equivalent conditions for entanglement measures on pure states to be monotone under asymptotic LOCC transformations. Together with Theorem~\ref{thm:loccqrateinf}, it implies our main result, Theorem~\ref{thm:main}. In the theorem below, asymptotic continuity is understood in a similar way as in Definition~\ref{def:asymptoticcontinuity}, but with the function defined only on pure states.
\begin{theorem}
Let $E:M^\textnormal{pure}\to\reals$ be a function that is fully additive, monotone on average and normalized to $E(\GHZ)=1$. The following are equivalent:
\begin{enumerate}
\item\label{it:Easymptoticallycontinuous} $E$ is asymptotically continuous,
\item\label{it:Epurefunctional} $E\in\purefunctionals{k}$,
\item\label{it:Edirectsum} for all $k$-partite state vectors $\varphi,\psi$ and $p\in[0,1]$ the equality
\begin{equation}
E(\sqrt{p}\varphi\oplus\sqrt{1-p}\psi)=pE(\varphi)+(1-p)E(\psi)+h(p)
\end{equation}
holds,
\item\label{it:Econtinuityestimate} for all $\varphi,\psi\in\mathcal{H}$ the continuity estimate
\begin{equation}\label{eq:Econtinuityestimate}
\left|E(\varphi)-E(\psi)\right|\le a(\purifieddistance(\ketbra{\varphi}{\varphi},\ketbra{\psi}{\psi}))\log\dim\mathcal{H}+b(\purifieddistance(\ketbra{\varphi}{\varphi},\ketbra{\psi}{\psi}))
\end{equation}
holds with
\begin{align}
a(\delta) & = \frac{\left(1+\delta^{\frac{2}{k+1}}\right)^{k+1}-1+\delta^2}{1-\delta^2}  \\
b(\delta) & = \frac{\left(1+\delta^{\frac{2}{k+1}}\right)^{k+1}}{1-\delta^2}h\left(\left(1+\delta^{\frac{2}{k+1}}\right)^{-1}\right).
\end{align}
\end{enumerate}
\end{theorem}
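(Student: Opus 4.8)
The plan is to prove the cycle of implications
\ref{it:Econtinuityestimate}$\Rightarrow$\ref{it:Easymptoticallycontinuous}$\Rightarrow$\ref{it:Epurefunctional}$\Rightarrow$\ref{it:Edirectsum}$\Rightarrow$\ref{it:Econtinuityestimate}, which makes all four conditions equivalent. The first and last links are easy. For \ref{it:Econtinuityestimate}$\Rightarrow$\ref{it:Easymptoticallycontinuous} I note that $a(0)=0$ and $b(0)=h(1)=0$ with $a,b$ continuous at $0$, so \eqref{eq:Econtinuityestimate} gives $\frac{|E(\varphi)-E(\psi)|}{1+\log\dim\mathcal{H}}\le a(\delta)+b(\delta)$ with $\delta=\purifieddistance(\ketbra{\varphi}{\varphi},\ketbra{\psi}{\psi})$, a modulus of continuity that is uniform in the dimension; taking it below $1$ also respects the convention that states on different spaces are at distance $1$. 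For \ref{it:Easymptoticallycontinuous}$\Rightarrow$\ref{it:Epurefunctional} I would run the argument announced after Definition~\ref{def:asymptoticcontinuity}: given $\varphi\ge\psi$, pick for each $n$ a state $\psi_n'$ on the space of $\psi^{\otimes n}$ with $\varphi^{\otimes n}\loccto\psi_n'$ and $\purifieddistance(\psi_n',\psi^{\otimes n})\le\epsilon_n$, $\epsilon_n\to 0$. Monotonicity (a special case of monotonicity on average) and additivity give $nE(\varphi)=E(\varphi^{\otimes n})\ge E(\psi_n')$, while asymptotic continuity gives $|E(\psi_n')-nE(\psi)|\le\omega(\epsilon_n)(1+n\log\dim\mathcal{K})$; dividing by $n$ and sending $n\to\infty$ yields $E(\varphi)\ge E(\psi)$.

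The core is \ref{it:Epurefunctional}$\Rightarrow$\ref{it:Edirectsum}. The inequality ``$\ge$'' is exactly Proposition~\ref{prop:superpositionlowerbound}. For the reverse inequality I would produce an asymptotic LOCC protocol that \emph{creates} the coherent superposition. Writing $D=\sqrt{p}\varphi\oplus\sqrt{1-p}\psi$, I claim that the element $X_n:=\varphi^{\otimes\lfloor np\rfloor}\otimes\psi^{\otimes\lceil n(1-p)\rceil}\otimes\GHZ^{\otimes c_n}$ with $c_n=nh(p)+o(n)$ satisfies $X_n\ge D^{\otimes n}$. The construction first dilutes the $\GHZ$ supply into the coherent type flag $\sum_x\sqrt{q(x)}\ket{x}^{\otimes k}$, where $q=\mathrm{Bernoulli}(p)^{\otimes n}$ (per-cut entropy $nh(p)$, hence preparable from $\approx nh(p)$ copies of $\GHZ$ by entanglement dilution), and then applies a flag-controlled local permutation routing the supplied copies of $\varphi$ and $\psi$ into the slots dictated by $x$. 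Restricting to the typical set of $x$ and keeping a small surplus of $\GHZ$ to absorb the $O(\sqrt{n})$ mismatch in the $\varphi$- and $\psi$-counts makes the transformation succeed with vanishing error at every blocklength, which is what the genuine preorder relation $X_n\ge D^{\otimes n}$ requires. Granting this, monotonicity under $\ge$ and full additivity give $\tfrac{\lfloor np\rfloor}{n}E(\varphi)+\tfrac{\lceil n(1-p)\rceil}{n}E(\psi)+\tfrac{c_n}{n}\ge E(D)$, and $n\to\infty$ delivers the matching upper bound.

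For \ref{it:Edirectsum}$\Rightarrow$\ref{it:Econtinuityestimate} the chain rule becomes a quantitative tool. Given $\varphi,\psi\in\mathcal{H}$ at purified distance $\delta$, I would fix phases (local, hence irrelevant to $E$) and write $\psi=\sqrt{1-\delta^2}\,\varphi+\delta\,\eta$, then build explicit few-term direct-sum states out of $\varphi$, $\psi$ and locally orthogonal corrections that are connected by single LOCC transformations. Expanding every direct sum with \ref{it:Edirectsum} and using monotonicity on average in both directions turns these into inequalities among $E(\varphi)$, $E(\psi)$, binary entropies of the branching weights, and auxiliary terms bounded by $0\le E(\cdot)\le\log\dim\mathcal{H}$; isolating $E(\varphi)-E(\psi)$ and its negative produces \eqref{eq:Econtinuityestimate}. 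The exponent $\tfrac{2}{k+1}$ and the factor $(1+\delta^{2/(k+1)})^{k+1}$ arise from spreading the error budget $\delta^2$ evenly across the $k$ local subsystems and the one classical flag and optimising the free probabilities, after which matching the stated $a(\delta),b(\delta)$ is routine bookkeeping.

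The main obstacle is the reverse inequality in \ref{it:Epurefunctional}$\Rightarrow$\ref{it:Edirectsum}: a coherent superposition of locally orthogonal branches cannot be created by LOCC without assistance, so the whole argument rests on realising the $nh(p)$-sized flag by dilution and on the method-of-types analysis that simultaneously matches the $\mathrm{Bernoulli}(p)$ amplitudes to the uniform $\GHZ$ resource and absorbs the sublinear fluctuations in the numbers of $\varphi$- and $\psi$-copies. The explicit-constant computation in \ref{it:Edirectsum}$\Rightarrow$\ref{it:Econtinuityestimate} is laborious but, once the right interpolating transformations are identified, essentially mechanical.
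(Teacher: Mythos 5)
Your cycle of implications is the same as the paper's, and two of the four links essentially coincide with its argument, but there is a genuine gap in \ref{it:Easymptoticallycontinuous}$\Rightarrow$\ref{it:Epurefunctional}. The relation $\varphi\ge\psi$ only supplies \emph{mixed} states $\psi_n'$ with $\ketbra{\varphi}{\varphi}^{\otimes n}\loccto\psi_n'$ and $\purifieddistance(\psi_n',\ketbra{\psi}{\psi}^{\otimes n})\to 0$, yet your chain $nE(\varphi)\ge E(\psi_n')\ge nE(\psi)-\omega(\epsilon_n)(1+n\log\dim\mathcal{K})$ evaluates $E$ on $\psi_n'$, which is undefined: $E$ is a function on $M^{\textnormal{pure}}$ only, and asymptotic continuity is likewise only assumed on pure states. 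This is precisely the obstacle the paper singles out as the main difficulty of this implication. The repair is to purify the protocol: since the input is pure, the LOCC channel can be refined so that, conditioned on a classical record $x$ communicated to all parties, the output is a pure state $\varphi_x$ with $\sum_x P(x)\ketbra{\varphi_x}{\varphi_x}=\psi_n'$; the identity $\sum_x P(x)\purifieddistance(\ketbra{\varphi_x}{\varphi_x},\ketbra{\psi}{\psi}^{\otimes n})^2=\purifieddistance(\psi_n',\ketbra{\psi}{\psi}^{\otimes n})^2$ together with Markov's inequality shows that all but an $\epsilon^2$-fraction of the branches are $\epsilon$-close to $\psi^{\otimes n}$, and monotonicity on average (not mere deterministic monotonicity) plus asymptotic continuity applied branchwise gives $E(\varphi)\ge E(\psi)$.

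Your \ref{it:Epurefunctional}$\Rightarrow$\ref{it:Edirectsum} is in substance the paper's construction: the paper packages your ``coherent type flag'' directly as $(\sqrt{p}\ket{0\ldots 0}+\sqrt{1-p}\ket{1\ldots 1})^{\otimes n}$ inside a single resource state and invokes \cite[Proposition 2]{jensen2019asymptotic} to apply the copy-discarding maps termwise in the direct sum; you implicitly need the same lemma to guarantee that routing and disposal of the surplus copies does not decohere the superposition across type classes, so it should be cited rather than assumed. By contrast, your \ref{it:Edirectsum}$\Rightarrow$\ref{it:Econtinuityestimate} is only a gesture: ``build explicit few-term direct-sum states connected by single LOCC transformations'' omits the one idea the step actually requires. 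The paper takes a \emph{non-convex} superposition $\omega=A\varphi+B\psi$ with $A=-\left(1-|\braket{\varphi}{\psi}|^2\right)^{-1/2}$, forms $\sqrt{q}\varphi\oplus\sqrt{1-q}\omega$, and has every party measure its flag qubit in a tilted basis; the sign of $A$ is arranged so that the all-$(\sqrt{\lambda}\ket{0}+\sqrt{1-\lambda}\ket{1})$ outcome cancels the $\varphi$-component and collapses the state exactly onto $\psi$, with probability $u=|B|^2(1-q)(1-\lambda)^k$. Monotonicity on average then gives $uE(\psi)\le qE(\varphi)+(1-q)E(\omega)+h(q)$, and bounding $E(\omega)\le\log\dim\mathcal{H}$ with $q=\lambda=(1+\delta^{2/(k+1)})^{-1}$ produces \eqref{eq:Econtinuityestimate}; without this (or an equivalent) construction the stated $a$ and $b$ are not ``routine bookkeeping''. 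The link \ref{it:Econtinuityestimate}$\Rightarrow$\ref{it:Easymptoticallycontinuous} and the lower bound in \ref{it:Edirectsum} via Proposition~\ref{prop:superpositionlowerbound} are fine as you state them.
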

\begin{proof}\leavevmode

\ref{it:Easymptoticallycontinuous}$\implies$\ref{it:Epurefunctional}:
The main difficulty in proving this implication is that $E$ is only assumed to be defined on pure states, while the preorder allows $\varphi^{\otimes n}$ to be transformed to a mixed state close to $\psi^{\otimes n}$. To overcome this, we need to argue that any such protocol can be modified in such a way that the output is pure conditioned on a classical label, and the resulting pure state is also close to $\psi^{\otimes n}$ with high probability. The details are as follows.

Let $\varphi\ge\psi$ with $\psi\in\mathcal{H}=\mathcal{H}_1\otimes\cdots\otimes\mathcal{H}_k$. This means that there is a sequence of states $\rho_n$ such that $\ketbra{\varphi}{\varphi}^{\otimes n}\loccto\rho_n$ for all $n$ and $\purifieddistance(\rho_n,\ketbra{\psi}{\psi}^{\otimes n})\to 0$ as $n\to\infty$. There is nothing to prove if $E(\psi)=0$, so we will assume $E(\psi)>0$.

Let $\delta\in(0,(1+\log\dim\mathcal{H})^{-1}E(\psi))$ and choose $\epsilon$ such that for any Hilbert space $\mathcal{K}$ and unit vectors $\omega,\tau\in\mathcal{K}$, $\purifieddistance(\ketbra{\omega}{\omega},\ketbra{\tau}{\tau})\le\epsilon$ implies $|E(\omega)-E(\tau)|\le\delta(1+\log\dim\mathcal{K})$.

Choose an $n\ge 1$ and $\rho$ such that $\ketbra{\varphi}{\varphi}^{\otimes n}\loccto\rho$ and $\purifieddistance(\rho,\ketbra{\psi}{\psi}^{\otimes n})\le\epsilon^2$. The initial state $\ketbra{\varphi}{\varphi}^{\otimes n}$ is pure, therefore it is possible to modify the LOCC protocol in such a way that it results in an ensemble of pure states whose average is $\rho$, and the classical label is available to each party at the end (i.e. the new protocol keeps track of the intermediate measurement results, and communicates them to all parties). This means
\begin{equation}
\ketbra{\varphi}{\varphi}^{\otimes n}\loccto\sum_{x\in\mathcal{X}}P(x)\ketbra{\varphi_x}{\varphi_x}\otimes\ketbra{x}{x},
\end{equation}
where $\mathcal{X}$ is some index set (which we can assume to be finite, by approximating the protocol with a finite-round one and measurements with finitely many outcomes if necessary), $P\in\distributions(\mathcal{X})$ and
\begin{equation}
\rho=\sum_{x\in\mathcal{X}}P(x)\ketbra{\varphi_x}{\varphi_x}.
\end{equation}

By the assumption on $\rho$, we have
\begin{equation}
\begin{split}
\epsilon^4
 & \ge \purifieddistance(\rho,\ketbra{\psi}{\psi}^{\otimes n})^2  \\
 & = 1-\fidelity(\rho,\ketbra{\psi}{\psi}^{\otimes n})^2  \\
 & = 1-\bra{\psi^{\otimes n}}\rho\ket{\psi^{\otimes n}}  \\
 & = \sum_{x\in\mathcal{X}}P(x)\left[1-\braket{\psi^{\otimes n}}{\varphi_x}\braket{\varphi_x}{\psi^{\otimes n}}\right]  \\
 & = \sum_{x\in\mathcal{X}}P(x)\purifieddistance(\ketbra{\varphi_x}{\varphi_x},\ketbra{\psi}{\psi}^{\otimes n})^2.
\end{split}
\end{equation}
Let $A=\setbuild{x\in\mathcal{X}}{\purifieddistance(\ketbra{\varphi_x}{\varphi_x},\ketbra{\psi}{\psi}^{\otimes n})^2\ge \epsilon^2}$. By the Markov inequality we have $P(A)\le\frac{\epsilon^4}{\epsilon^2}=\epsilon^2$. We use that $E$ is additive, monotone on average, and the choice of $\epsilon$:
\begin{equation}
\begin{split}
nE(\varphi)
 & = E(\varphi^{\otimes n})  \\
 & \ge \sum_{x\in\mathcal{X}}P(x)E(\varphi_x)  \\
 & \ge \sum_{x\in\mathcal{X}\setminus A}P(x)E(\varphi_x)  \\
 & \ge \sum_{x\in\mathcal{X}\setminus A}P(x)\left[E(\psi^{\otimes n})-\delta(1+n\log\dim\mathcal{H})\right]  \\
 & = (1-P(A))\left[nE(\psi)-\delta(1+n\log\dim\mathcal{H})\right]  \\
 & \ge (1-\epsilon^2)\left[nE(\psi)-\delta(1+n\log\dim\mathcal{H})\right].
\end{split}
\end{equation}
Divide by $n$ and let $n\to\infty$, then $\epsilon\to 0$ and finally $\delta\to 0$ to get $E(\varphi)\ge E(\psi)$.

\ref{it:Epurefunctional}$\implies$\ref{it:Edirectsum}:
The inequality $E(\sqrt{p}\varphi\oplus\sqrt{1-p}\psi)\ge pE(\varphi)+(1-p)E(\psi)+h(p)$ is true by Proposition~\ref{prop:superpositionlowerbound}, therefore we only need to show the reverse inequality. Let $\delta>0$ and $n\in\naturals$, and consider the state
\begin{equation}
\omega=\varphi^{\otimes\lceil n(p+\delta)\rceil}\otimes\psi^{\otimes\lceil n(1-p+\delta)\rceil}\otimes(\sqrt{p}\ket{0\ldots 0}+\sqrt{1-p}\ket{1\ldots 1})^{\otimes n}.
\end{equation}
$t$ copies of this state can be written as a direct sum
\begin{equation}
\begin{split}
\omega^{\otimes t}
 & = \varphi^{\otimes t\lceil n(p+\delta)\rceil}\otimes\psi^{\otimes t\lceil n(1-p+\delta)\rceil}\otimes(\sqrt{p}\ket{0\ldots 0}+\sqrt{1-p}\ket{1\ldots 1})^{\otimes tn}  \\
 & = \bigoplus_{m=0}^{tn}\sqrt{\binom{tn}{m}p^m(1-p)^{tn-m}}\varphi^{\otimes t\lceil n(p+\delta)\rceil}\otimes\psi^{\otimes t\lceil n(1-p+\delta)\rceil}\otimes\GHZ_{\binom{tn}{m}}.
\end{split}
\end{equation}
The relation
\begin{equation}
\varphi^{\otimes t\lceil n(p+\delta)\rceil}\otimes\psi^{\otimes t\lceil n(1-p+\delta)\rceil} \loccto \varphi^{\otimes m}\otimes\psi^{\otimes tn-m}
\end{equation}
holds as long as $tn-t\lceil n(1-p+\delta)\rceil\le m\le t\lceil n(p+\delta)\rceil$, since the number of copies can be reduced by LOCC (tracing out local subsystems). For the remaining terms we use that the left hand side can be transformed into a separable state $\chi_m$. By \cite[Proposition 2.]{jensen2019asymptotic}, the transformations can be applied termwise in the direct sum, i.e.
\begin{multline}\label{eq:approximatesumpower}
\omega^{\otimes t}\loccto\bigoplus_{m=tn-t\lceil n(1-p+\delta)\rceil}^{t\lceil n(p+\delta)\rceil}\sqrt{\binom{tn}{m}p^m(1-p)^{tn-m}}\varphi^{\otimes m}\otimes\psi^{\otimes tn-m}\otimes\GHZ_{\binom{tn}{m}}  \\  \oplus\bigoplus_{\substack{m<tn-t\lceil n(1-p+\delta)\rceil  \\  \text{or }m>t\lceil n(p+\delta)\rceil}}\sqrt{\binom{tn}{m}p^m(1-p)^{tn-m}}\chi_m\otimes\GHZ_{\binom{tn}{m}}.
\end{multline}
On the other hand, $(\sqrt{p}\varphi\oplus\sqrt{1-p}\psi)^{\otimes tn}$ may be written as
\begin{equation}\label{eq:sumpower}
\bigoplus_{m=0}^{tn}\sqrt{\binom{tn}{m}p^m(1-p)^{tn-m}}\varphi^{\otimes m}\otimes\psi^{\otimes tn-m}\otimes\GHZ_{\binom{tn}{m}}.
\end{equation}
We choose the separable states $\chi_m$ to have nonnegative inner product with the corresponding term in \eqref{eq:approximatesumpower}, so that the overlap between the right hand side of \eqref{eq:approximatesumpower} and \eqref{eq:sumpower} is at least
\begin{equation}
\sum_{m=tn-t\lceil n(1-p+\delta)\rceil}^{t\lceil n(p+\delta)\rceil}\binom{tn}{m}p^m(1-p)^{tn-m},
\end{equation}
as can be seen by considering only the common terms in the direct sums. The limit of this sum as $t\to\infty$ is $1$, therefore $\omega\ge(\sqrt{p}\varphi\oplus\sqrt{1-p}\psi)^{\otimes n}$.

$E$ is assumed to be in $\purefunctionals{k}$, therefore
\begin{equation}
\begin{split}
nE(\sqrt{p}\varphi\oplus\sqrt{1-p}\psi)
 & = E((\sqrt{p}\varphi\oplus\sqrt{1-p}\psi)^{\otimes n})  \\
 & \le E(\omega)  \\
 & = \lceil n(p+\delta)\rceil E(\varphi)+\lceil n(1-p+\delta)\rceil E(\psi)+nh(p).
\end{split}
\end{equation}
We divide by $n$, let $n\to\infty$ and then $\delta\to 0$ to get $E(\sqrt{p}\varphi\oplus\sqrt{1-p}\psi)\le pE(\varphi)+(1-p)E(\psi)+h(p)$.

\ref{it:Edirectsum}$\implies$\ref{it:Econtinuityestimate}:
Let $\varphi,\psi\in\mathcal{H}=\mathcal{H}_1\otimes\cdots\otimes\mathcal{H}_k$ be unit vectors. We may assume that $\purifieddistance(\ketbra{\varphi}{\varphi},\ketbra{\psi}{\psi})\in(0,1)$, since there is nothing to prove otherwise. We consider the following construction with $A,B\in\complexes\setminus\{0\}$, $q,\lambda\in(0,1)$ to be chosen later, subject to the constraints
\begin{align}
1 & = |A|^2+|B|^2+2\Re\overline{A}B\braket{\varphi}{\psi}  \label{eq:omegaunit}  \\
A & = -\sqrt{\frac{q}{1-q}\left(\frac{\lambda}{1-\lambda}\right)^k}  \label{eq:Acancel}
.
\end{align}
\eqref{eq:omegaunit} ensures that $\omega:=A\varphi+B\psi$ is a unit vector. We will think of the direct sum $\sqrt{q}\varphi\oplus\sqrt{1-q}\omega$ as the vector
\begin{equation}
\sqrt{q}\ket{\varphi}\otimes\ket{0\ldots 0}+\sqrt{1-q}\ket{\omega}\otimes\ket{1\ldots 1}\in\mathcal{H}\otimes(\complexes^2)^{\otimes k}.
\end{equation}
Let each party perform a projective measurement on their local qubit in the basis $\sqrt{\lambda}\ket{0}+\sqrt{1-\lambda}\ket{1}$ and $\sqrt{1-\lambda}\ket{0}-\sqrt{\lambda}\ket{1}$. Of the $2^k$ possible combinations of the outcomes, we focus on the one where every party projects onto $\sqrt{\lambda}\ket{0}+\sqrt{1-\lambda}\ket{1}$. The result of the projection is
\begin{multline}
\left[I_\mathcal{H}\otimes\left(\sqrt{\lambda}\bra{0}+\sqrt{1-\lambda}\bra{1}\right)^{\otimes k}\right]\left(\sqrt{q}\ket{\varphi}\otimes\ket{0\ldots 0}+\sqrt{1-q}\ket{\omega}\otimes\ket{1\ldots 1}\right)  \\
\begin{split}
 & = \sqrt{q\lambda^k}\ket{\varphi}+\sqrt{(1-q)(1-\lambda)^k}\ket{\omega}  \\
 & = \left(\sqrt{q\lambda^k}+A\sqrt{(1-q)(1-\lambda)^k}\right)\ket{\varphi}+B\sqrt{(1-q)(1-\lambda)^k}\ket{\psi}  \\
 & = B\sqrt{(1-q)(1-\lambda)^k}\ket{\psi},
\end{split}
\end{multline}
where the last equality uses \eqref{eq:Acancel}. Let $u=\min\{q,|B|^2(1-q)(1-\lambda)^k\}$. $E$ is monotone on average, therefore
\begin{equation}\label{eq:sumprojectionmonotone}
\begin{split}
u E(\psi)
 & \le |B|^2(1-q)(1-\lambda)^k E(\psi)  \\
 & \le E(\sqrt{q}\ket{\varphi}\otimes\ket{0\ldots 0}+\sqrt{1-q}\ket{\omega}\otimes\ket{1\ldots 1})  \\
 & = q E(\varphi)+(1-q)E(\omega)+h(q).
\end{split}
\end{equation}
We rearrange and use that $E(\varphi)$ and $E(\omega)$ are at most $\log\dim\mathcal{H}$:
\begin{equation}
\begin{split}
u(E(\psi)-E(\varphi))
 & \le (q-u)E(\varphi)+(1-q)E(\omega)+h(q)  \\
 & \le (1-u)\log\dim\mathcal{H}+h(q),
\end{split}
\end{equation}
and divide by $u$:
\begin{equation}\label{eq:onesidedestimate}
E(\psi)-E(\varphi)\le\frac{1-u}{u}\log\dim\mathcal{H}+\frac{h(q)}{u}.
\end{equation}

To get a continuity estimate, we need that $u\to 1$ as $|\braket{\varphi}{\psi}|\to 1$. To ensure this, we choose the values of the parameters (non-optimally) as
\begin{align}
A & = -\frac{1}{\sqrt{1-|\braket{\varphi}{\psi}|^2}}  \\
B & = \frac{\overline{\braket{\varphi}{\psi}}}{\sqrt{1-|\braket{\varphi}{\psi}|^2}}  \\
q = \lambda & = \frac{1}{1+\sqrt[k+1]{1-|\braket{\varphi}{\psi}|^2}} = \frac{1}{1+\purifieddistance(\ketbra{\varphi}{\varphi},\ketbra{\psi}{\psi})^\frac{2}{k+1}}
.
\end{align}
Then the conditions \eqref{eq:omegaunit} and \eqref{eq:Acancel} are satisfied, and with the abbreviation $F=|\braket{\varphi}{\psi}|$ we have
\begin{equation}
\begin{split}
u
 & = |B|^2(1-q)(1-\lambda)^k  \\
 & = \frac{F^2}{1-F^2}\frac{1-F^2}{(1+\sqrt[k+1]{1-F^2})^{k+1}}  \\
 & = \frac{F^2}{(1+\sqrt[k+1]{1-F^2})^{k+1}}  \\
 & = \frac{1-\purifieddistance(\ketbra{\varphi}{\varphi},\ketbra{\psi}{\psi})^2}{\left(1+\purifieddistance(\ketbra{\varphi}{\varphi},\ketbra{\psi}{\psi})^\frac{2}{k+1}\right)^{k+1}}
.
\end{split}
\end{equation}

Since $u$ and $q$ only depend on the fidelity between $\varphi$ and $\psi$, the upper bound \eqref{eq:onesidedestimate} also holds for the absolute value of the left hand side. This proves \ref{it:Econtinuityestimate}.

\ref{it:Econtinuityestimate}$\implies$\ref{it:Easymptoticallycontinuous}: From the continuity estimate for $\varphi,\psi\in\mathcal{H}$ we get the inequality
\begin{equation}
\left|\frac{E(\varphi)}{1+\log\dim\mathcal{H}}-\frac{E(\psi)}{1+\log\dim\mathcal{H}}\right|
 \le a(\purifieddistance(\ketbra{\varphi}{\varphi},\ketbra{\psi}{\psi}))+b(\purifieddistance(\ketbra{\varphi}{\varphi},\ketbra{\psi}{\psi})),
\end{equation}
which only depends on the distance between $\varphi$ and $\psi$. Therefore $\frac{E}{1+\log\dim\mathcal{H}}$ is uniformly continuous.
\end{proof}

\begin{remark}
It follows from the implication \ref{it:Epurefunctional}$\implies$\ref{it:Econtinuityestimate} that on any $k$-partite pure state space $\purefunctionals{k}$ is a uniformly equicontinuous set of functions. In particular,
\begin{align}
\loccqrate{\ketbra{\varphi}{\varphi}}{\GHZ} & = \min_{E\in\purefunctionals{k}}E(\varphi)  \\
\intertext{and}
\frac{1}{\loccqrate{\GHZ}{\ketbra{\varphi}{\varphi}}} & = \max_{E\in\purefunctionals{k}}E(\varphi)
\end{align}
are also uniformly continuous, satisfying the same continuity estimate as in \ref{it:Econtinuityestimate}.
\end{remark}

We note that the restriction of an element of $\functionals{k}$ to $M^{\textnormal{pure}}$ is in $\purefunctionals{k}$, therefore is asymptotically continuous on pure states.

\section{Concluding remarks}

In this paper we proved characterizations of entanglement transformation rates in terms of sets of multipartite entanglement measures defined implicitly through axioms that they satisfy. Both in the general mixed state case and restricted to pure states, these axioms include full additivity, and in the case of pure states we showed that the required monotonicity under asymptotic LOCC transformations can be replaced with properties that do not involve an asymptotic limit: monotonicity on average and an explicit continuity estimate. These results suggest several open problems that we believe are worth investigating. The most obvious one is to find all the elements of $\functionals{k}$ or $\purefunctionals{k}$, which is without doubt a tremendous challenge. We describe some other open questions that may be less difficult to resolve.
\begin{itemize}
\item In connection with the possibility of reversible asymptotic entanglement transformations, it was shown in \cite{linden2005reversibility} that there are no reversible transformations between tripartite GHZ states and any combination of EPR pairs between pairs of parties. By Theorem~\ref{thm:main}, this can in principle be reproved using suitable elements of $\purefunctionals{3}$. The marginal entropies are in $\purefunctionals{3}$, and show that the only possibility would be that $\EPR_{AB}\otimes\EPR_{BC}\otimes\EPR_{AC}$ is asymptotically equivalent to $\GHZ^{\otimes 2}$. The task is therefore to find an element $E$ of $\purefunctionals{3}$ such that $E(\EPR_{AB})+E(\EPR_{BC})+E(\EPR_{AC})\neq 2$.
\item Is every element of $\functionals{k}$ asymptotically continuous? Our proof method for the pure case does not seem to have an analogue for mixed states. Since these functionals are convex, a possible route for proving the mixed case would be to show that $\rho\mapsto E(\rho)+\entropy(\rho)$ is concave, as in the case of the relative entropy of entanglement \cite{linden2005reversibility}, and argue as in \cite[Proposition 3.23]{christandl2006structure}.
\item The transformations that we consider allow a sublinear amount of quantum communication (or GHZ states) in the limit of many copies. It appears to be an open question if this actually helps or the rate would be the same without any quantum communication (see partial results in \cite{thapliyal2003multipartite}).
\item We have not made any attempt to optimize the continuity estimate \eqref{eq:Econtinuityestimate}, as it seems unlikely that the proof method leads to a significantly better bound. It would be interesting to see if the $\delta$-dependence in the first term can be improved from $O(\delta^{\frac{2}{k+1}})$ to $O(\delta)$ in general. By the Fannes inequality this is possible for the entropy of entanglement across any bipartite cut.
\item Finally, we sketch a possible route to characterizing the rates $\loccqrate{\ketbra{\varphi}{\varphi}}{\sigma}$ when the initial state is pure and distillable (globally entangled), in terms of the functionals defined on pure states (of which the present work offers a better understanding). To this end one could consider a formation-type extension of the rates between pure states:
\begin{equation}
\begin{split}
E_{F,\varphi}(\sigma)
 & = \inf_{\substack{(p_x,\psi_x)_{x\in\mathcal{X}}  \\  \sum_{x\in\mathcal{X}}p_x\ketbra{\psi_x}{\psi_x}=\rho}}\sum_{x\in\mathcal{X}}\frac{p_x}{\loccqrate{\ketbra{\varphi}{\varphi}}{\ketbra{\psi_x}{\psi_x}}}  \\
 & = \inf_{\substack{(p_x,\psi_x)_{x\in\mathcal{X}}  \\  \sum_{x\in\mathcal{X}}p_x\ketbra{\psi_x}{\psi_x}=\rho}}\sum_{x\in\mathcal{X}}p_x\sup_{E\in\purefunctionals{k}}\frac{E(\psi_x)}{E(\phi)}
,
\end{split}
\end{equation}
i.e. the convex roof extension of $\loccqrate{\ketbra{\varphi}{\varphi}}{\cdot}^{-1}$,
which should be an upper bound on $\loccqrate{\ketbra{\varphi}{\varphi}}{\sigma}^{-1}$. We expect that the regularization of $E_{F,\varphi}$ is equal to this rate, as is the case with the usual entanglement of formation and entanglement cost \cite{hayden2001asymptotic}. However, $E_{F,\varphi}(\sigma)$ is probably not additive for any $\varphi$, as it is known to be non-additive for $\varphi=\EPR$ \cite{shor2004equivalence,hastings2009superadditivity}.
\end{itemize}

\section*{Acknowledgement}

I thank Asger Kj{\ae}rulff Jensen for discussions. This work was supported by the \'UNKP-20-5 New National Excellence Program of the Ministry for Innovation and Technology and the Bolyai J\'anos Research Fellowship of the Hungarian Academy of Sciences. We acknowledge support from the Hungarian National Research, Development and Innovation Office (NKFIH) within the Quantum Technology National Excellence Program (Project Nr.~2017-1.2.1-NKP-2017-00001) and via the research grants K124152, KH129601.

\bibliography{refs}{}

\newcommand{\etalchar}[1]{$^{#1}$}
\begin{thebibliography}{BDSW96}

\bibitem[AF04]{alicki2004continuity}
Robert Alicki and Mark Fannes.
\newblock Continuity of quantum conditional information.
\newblock {\em Journal of Physics A: Mathematical and General}, 37(5):L55,
  2004.
\newblock \href {http://arxiv.org/abs/quant-ph/0312081}
  {\path{arXiv:quant-ph/0312081}}, \href
  {http://dx.doi.org/10.1088/0305-4470/37/5/L01}
  {\path{doi:10.1088/0305-4470/37/5/L01}}.

\bibitem[BBPS96]{bennett1996concentrating}
Charles~H Bennett, Herbert~J Bernstein, Sandu Popescu, and Benjamin Schumacher.
\newblock Concentrating partial entanglement by local operations.
\newblock {\em Physical Review A}, 53(4):2046, 1996.
\newblock \href {http://arxiv.org/abs/quant-ph/9511030}
  {\path{arXiv:quant-ph/9511030}}, \href
  {http://dx.doi.org/10.1103/PhysRevA.53.2046}
  {\path{doi:10.1103/PhysRevA.53.2046}}.

\bibitem[BDSW96]{bennett1996mixed}
Charles~H Bennett, David~P DiVincenzo, John~A Smolin, and William~K Wootters.
\newblock Mixed-state entanglement and quantum error correction.
\newblock {\em Physical Review A}, 54(5):3824, 1996.
\newblock \href {http://arxiv.org/abs/quant-ph/9604024}
  {\path{arXiv:quant-ph/9604024}}, \href
  {http://dx.doi.org/10.1103/PhysRevA.54.3824}
  {\path{doi:10.1103/PhysRevA.54.3824}}.

\bibitem[BHPV07]{brandao2007remarks}
Fernando~GSL Brand{\~a}o, Michal Horodecki, Martin~B Plenio, and Shashank
  Virmani.
\newblock Remarks on the equivalence of full additivity and monotonicity for
  the entanglement cost.
\newblock {\em Open Systems \& Information Dynamics}, 14(03):333--339, 2007.
\newblock \href {http://arxiv.org/abs/quant-ph/0702136}
  {\path{arXiv:quant-ph/0702136}}, \href
  {http://dx.doi.org/10.1007/s11080-007-9056-0}
  {\path{doi:10.1007/s11080-007-9056-0}}.

\bibitem[BPR{\etalchar{+}}00]{bennett2000exact}
Charles~H Bennett, Sandu Popescu, Daniel Rohrlich, John~A Smolin, and Ashish~V
  Thapliyal.
\newblock Exact and asymptotic measures of multipartite pure-state
  entanglement.
\newblock {\em Physical Review A}, 63(1):012307, 2000.
\newblock \href {http://arxiv.org/abs/quant-ph/9908073}
  {\path{arXiv:quant-ph/9908073}}, \href
  {http://dx.doi.org/10.1103/PhysRevA.63.012307}
  {\path{doi:10.1103/PhysRevA.63.012307}}.

\bibitem[Chr06]{christandl2006structure}
Matthias Christandl.
\newblock {\em The structure of bipartite quantum states-insights from group
  theory and cryptography}.
\newblock PhD thesis, 2006.
\newblock \href {http://arxiv.org/abs/quant-ph/0604183}
  {\path{arXiv:quant-ph/0604183}}.

\bibitem[CK11]{csiszar2011information}
Imre Csisz{\'a}r and J{\'a}nos K{\"o}rner.
\newblock {\em Information theory: coding theorems for discrete memoryless
  systems}.
\newblock Cambridge University Press, second edition, 2011.
\newblock \href {http://dx.doi.org/10.1017/CBO9780511921889}
  {\path{doi:10.1017/CBO9780511921889}}.

\bibitem[CL07]{chen2007multi}
Kai Chen and Hoi-Kwong Lo.
\newblock Multi-partite quantum cryptographic protocols with noisy ghz states.
\newblock {\em Quantum Information \& Computation}, 7(8):689--715, nov 2007.
\newblock \href {http://arxiv.org/abs/quant-ph/0404133}
  {\path{arXiv:quant-ph/0404133}}, \href
  {http://dx.doi.org/https://doi.org/10.26421/QIC7.8-1}
  {\path{doi:https://doi.org/10.26421/QIC7.8-1}}.

\bibitem[CW04]{christandl2004squashed}
Matthias Christandl and Andreas Winter.
\newblock ``{S}quashed entanglement'': an additive entanglement measure.
\newblock {\em Journal of mathematical physics}, 45(3):829--840, 2004.
\newblock \href {http://arxiv.org/abs/quant-ph/0308088}
  {\path{arXiv:quant-ph/0308088}}, \href {http://dx.doi.org/10.1063/1.1643788}
  {\path{doi:10.1063/1.1643788}}.

\bibitem[DH99]{donald1999continuity}
Matthew~J Donald and Micha{\l} Horodecki.
\newblock Continuity of relative entropy of entanglement.
\newblock {\em Physics Letters A}, 264(4):257--260, 1999.
\newblock \href {http://arxiv.org/abs/quant-ph/9910002}
  {\path{arXiv:quant-ph/9910002}}, \href
  {http://dx.doi.org/10.1016/S0375-9601(99)00813-0}
  {\path{doi:10.1016/S0375-9601(99)00813-0}}.

\bibitem[DHR02]{donald2002uniqueness}
Matthew~J Donald, Micha{\l} Horodecki, and Oliver Rudolph.
\newblock The uniqueness theorem for entanglement measures.
\newblock {\em Journal of Mathematical Physics}, 43(9):4252--4272, 2002.
\newblock \href {http://arxiv.org/abs/quant-ph/0105017}
  {\path{arXiv:quant-ph/0105017}}, \href {http://dx.doi.org/10.1063/1.1495917}
  {\path{doi:10.1063/1.1495917}}.

\bibitem[Fan73]{fannes1973continuity}
Mark Fannes.
\newblock A continuity property of the entropy density for spin lattice
  systems.
\newblock {\em Communications in Mathematical Physics}, 31(4):291--294, 1973.
\newblock \href {http://dx.doi.org/10.1007/BF01646490}
  {\path{doi:10.1007/BF01646490}}.

\bibitem[Fri17]{fritz2017resource}
Tobias Fritz.
\newblock Resource convertibility and ordered commutative monoids.
\newblock {\em Mathematical Structures in Computer Science}, 27(6):850--938,
  2017.
\newblock \href {http://arxiv.org/abs/1504.03661} {\path{arXiv:1504.03661}},
  \href {http://dx.doi.org/10.1017/S0960129515000444}
  {\path{doi:10.1017/S0960129515000444}}.

\bibitem[FVDG99]{fuchs1999cryptographic}
Christopher~A Fuchs and Jeroen Van De~Graaf.
\newblock Cryptographic distinguishability measures for quantum-mechanical
  states.
\newblock {\em IEEE Transactions on Information Theory}, 45(4):1216--1227,
  1999.
\newblock \href {http://arxiv.org/abs/quant-ph/9712042}
  {\path{arXiv:quant-ph/9712042}}, \href {http://dx.doi.org/10.1109/18.761271}
  {\path{doi:10.1109/18.761271}}.

\bibitem[GLN05]{gilchrist2005distance}
Alexei Gilchrist, Nathan~K Langford, and Michael~A Nielsen.
\newblock Distance measures to compare real and ideal quantum processes.
\newblock {\em Physical Review A}, 71(6):062310, 2005.
\newblock \href {http://arxiv.org/abs/quant-ph/0408063}
  {\path{arXiv:quant-ph/0408063}}, \href
  {http://dx.doi.org/10.1103/PhysRevA.71.062310}
  {\path{doi:10.1103/PhysRevA.71.062310}}.

\bibitem[Has09]{hastings2009superadditivity}
Matthew~B Hastings.
\newblock Superadditivity of communication capacity using entangled inputs.
\newblock {\em Nature Physics}, 5(4):255--257, 2009.
\newblock \href {http://arxiv.org/abs/0809.3972} {\path{arXiv:0809.3972}},
  \href {http://dx.doi.org/10.1038/nphys1224} {\path{doi:10.1038/nphys1224}}.

\bibitem[HHH00]{horodecki2000limits}
Micha{\l} Horodecki, Pawe{\l} Horodecki, and Ryszard Horodecki.
\newblock Limits for entanglement measures.
\newblock {\em Physical Review Letters}, 84(9):2014, 2000.
\newblock \href {http://arxiv.org/abs/quant-ph/9908065}
  {\path{arXiv:quant-ph/9908065}}, \href
  {http://dx.doi.org/10.1103/PhysRevLett.84.2014}
  {\path{doi:10.1103/PhysRevLett.84.2014}}.

\bibitem[HHT01]{hayden2001asymptotic}
Patrick~M Hayden, Michal Horodecki, and Barbara~M Terhal.
\newblock The asymptotic entanglement cost of preparing a quantum state.
\newblock {\em Journal of Physics A: Mathematical and General}, 34(35):6891,
  2001.
\newblock \href {http://arxiv.org/abs/quant-ph/0008134}
  {\path{arXiv:quant-ph/0008134}}, \href
  {http://dx.doi.org/10.1088/0305-4470/34/35/314}
  {\path{doi:10.1088/0305-4470/34/35/314}}.

\bibitem[HSS{\etalchar{+}}03]{horodecki2003rates}
Micha{\l} Horodecki, Aditi Sen, Ujjwal Sen, et~al.
\newblock Rates of asymptotic entanglement transformations for bipartite mixed
  states: Maximally entangled states are not special.
\newblock {\em Physical Review A}, 67(6):062314, 2003.
\newblock \href {http://arxiv.org/abs/quant-ph/0207031}
  {\path{arXiv:quant-ph/0207031}}, \href
  {http://dx.doi.org/10.1103/PhysRevA.67.062314}
  {\path{doi:10.1103/PhysRevA.67.062314}}.

\bibitem[JV19]{jensen2019asymptotic}
Asger~Kj{\ae}rulff Jensen and P{\'e}ter Vrana.
\newblock The asymptotic spectrum of {LOCC} transformations.
\newblock {\em IEEE Transactions on Information Theory}, 66(1):155--166, Jan
  2019.
\newblock \href {http://arxiv.org/abs/1807.05130} {\path{arXiv:1807.05130}},
  \href {http://dx.doi.org/10.1109/TIT.2019.2927555}
  {\path{doi:10.1109/TIT.2019.2927555}}.

\bibitem[LPSW05]{linden2005reversibility}
Noah Linden, Sandu Popescu, Benjamin Schumacher, and Michael Westmoreland.
\newblock Reversibility of local transformations of multiparticle entanglement.
\newblock {\em Quantum Information Processing}, 4(3):241--250, 2005.
\newblock \href {http://arxiv.org/abs/quant-ph/9912039}
  {\path{arXiv:quant-ph/9912039}}, \href
  {http://dx.doi.org/10.1007/s11128-005-4608-0}
  {\path{doi:10.1007/s11128-005-4608-0}}.

\bibitem[MS02]{maneva2002improved}
Elitza~N Maneva and John~A Smolin.
\newblock Improved two-party and multi-party purification protocols.
\newblock {\em Contemporary Mathematics}, 305:203--212, 2002.
\newblock \href {http://arxiv.org/abs/quant-ph/0003099}
  {\path{arXiv:quant-ph/0003099}}, \href
  {http://dx.doi.org/10.1090/conm/305/05220}
  {\path{doi:10.1090/conm/305/05220}}.

\bibitem[PR97]{popescu1997thermodynamics}
Sandu Popescu and Daniel Rohrlich.
\newblock Thermodynamics and the measure of entanglement.
\newblock {\em Physical Review A}, 56(5):R3319, 1997.
\newblock \href {http://arxiv.org/abs/quant-ph/9610044}
  {\path{arXiv:quant-ph/9610044}}, \href
  {http://dx.doi.org/10.1103/PhysRevA.56.R3319}
  {\path{doi:10.1103/PhysRevA.56.R3319}}.

\bibitem[Rai99]{rains1999rigorous}
Eric~M Rains.
\newblock Rigorous treatment of distillable entanglement.
\newblock {\em Physical Review A}, 60(1):173, 1999.
\newblock \href {http://arxiv.org/abs/quant-ph/9809078}
  {\path{arXiv:quant-ph/9809078}}, \href
  {http://dx.doi.org/10.1103/PhysRevA.60.173}
  {\path{doi:10.1103/PhysRevA.60.173}}.

\bibitem[Sho04]{shor2004equivalence}
Peter~W Shor.
\newblock Equivalence of additivity questions in quantum information theory.
\newblock {\em Communications in Mathematical Physics}, 246(3):453--472, 2004.
\newblock \href {http://arxiv.org/abs/quant-ph/0305035}
  {\path{arXiv:quant-ph/0305035}}, \href
  {http://dx.doi.org/10.1007/s00220-003-0981-7}
  {\path{doi:10.1007/s00220-003-0981-7}}.

\bibitem[SRH06]{synak2006asymptotic}
Barbara Synak-Radtke and Micha{\l} Horodecki.
\newblock On asymptotic continuity of functions of quantum states.
\newblock {\em Journal of Physics A: Mathematical and General}, 39(26):L423,
  2006.
\newblock \href {http://arxiv.org/abs/quant-ph/0507126}
  {\path{arXiv:quant-ph/0507126}}, \href
  {http://dx.doi.org/10.1088/0305-4470/39/26/L02}
  {\path{doi:10.1088/0305-4470/39/26/L02}}.

\bibitem[SST01]{shor2001nonadditivity}
Peter~W Shor, John~A Smolin, and Barbara~M Terhal.
\newblock Nonadditivity of bipartite distillable entanglement follows from a
  conjecture on bound entangled werner states.
\newblock {\em Physical Review Letters}, 86(12):2681, 2001.
\newblock \href {http://arxiv.org/abs/quant-ph/0010054}
  {\path{arXiv:quant-ph/0010054}}, \href
  {http://dx.doi.org/10.1103/PhysRevLett.86.2681}
  {\path{doi:10.1103/PhysRevLett.86.2681}}.

\bibitem[TCR10]{tomamichel2010duality}
Marco Tomamichel, Roger Colbeck, and Renato Renner.
\newblock Duality between smooth min-and max-entropies.
\newblock {\em IEEE Transactions on Information Theory}, 56(9):4674--4681,
  2010.
\newblock \href {http://arxiv.org/abs/0907.5238} {\path{arXiv:0907.5238}},
  \href {http://dx.doi.org/10.1109/TIT.2010.2054130}
  {\path{doi:10.1109/TIT.2010.2054130}}.

\bibitem[TS03]{thapliyal2003multipartite}
Ashish~V Thapliyal and John~A Smolin.
\newblock Multipartite entanglement gambling: The power of asymptotic state
  transformations assisted by a sublinear amount of quantum communication.
\newblock {\em Physical Review A}, 68(6):062324, 2003.
\newblock \href {http://arxiv.org/abs/quant-ph/0212098}
  {\path{arXiv:quant-ph/0212098}}, \href
  {http://dx.doi.org/10.1103/PhysRevA.68.062324}
  {\path{doi:10.1103/PhysRevA.68.062324}}.

\bibitem[Vid00]{vidal2000entanglement}
Guifr{\'e} Vidal.
\newblock Entanglement monotones.
\newblock {\em Journal of Modern Optics}, 47(2-3):355--376, 2000.
\newblock \href {http://arxiv.org/abs/quant-ph/9807077}
  {\path{arXiv:quant-ph/9807077}}, \href
  {http://dx.doi.org/10.1080/09500340008244048}
  {\path{doi:10.1080/09500340008244048}}.

\bibitem[VPRK97]{vedral1997quantifying}
Vlatko Vedral, Martin~B Plenio, Michael~A Rippin, and Peter~L Knight.
\newblock Quantifying entanglement.
\newblock {\em Physical Review Letters}, 78(12):2275, 1997.
\newblock \href {http://arxiv.org/abs/quant-ph/9702027}
  {\path{arXiv:quant-ph/9702027}}, \href
  {http://dx.doi.org/10.1103/PhysRevLett.78.2275}
  {\path{doi:10.1103/PhysRevLett.78.2275}}.

\bibitem[VW01]{vollbrecht2001entanglement}
Karl Gerd~H Vollbrecht and Reinhard~F Werner.
\newblock Entanglement measures under symmetry.
\newblock {\em Physical Review A}, 64(6):062307, 2001.
\newblock \href {http://arxiv.org/abs/quant-ph/0010095}
  {\path{arXiv:quant-ph/0010095}}, \href
  {http://dx.doi.org/10.1103/PhysRevA.64.062307}
  {\path{doi:10.1103/PhysRevA.64.062307}}.

\bibitem[VW02]{vidal2002computable}
Guifr{\'e} Vidal and Reinhard~F Werner.
\newblock Computable measure of entanglement.
\newblock {\em Physical Review A}, 65(3):032314, 2002.
\newblock \href {http://arxiv.org/abs/quant-ph/0102117}
  {\path{arXiv:quant-ph/0102117}}, \href
  {http://dx.doi.org/10.1103/PhysRevA.65.032314}
  {\path{doi:10.1103/PhysRevA.65.032314}}.

\bibitem[YHH{\etalchar{+}}09]{yang2009squashed}
Dong Yang, Karol Horodecki, Michal Horodecki, Pawel Horodecki, Jonathan
  Oppenheim, and Wei Song.
\newblock Squashed entanglement for multipartite states and entanglement
  measures based on the mixed convex roof.
\newblock {\em IEEE Transactions on Information Theory}, 55(7):3375--3387,
  2009.
\newblock \href {http://arxiv.org/abs/0704.2236} {\path{arXiv:0704.2236}},
  \href {http://dx.doi.org/10.1109/TIT.2009.2021373}
  {\path{doi:10.1109/TIT.2009.2021373}}.

\bibitem[YHW08]{yang2008additive}
Dong Yang, Micha{\l} Horodecki, and ZD~Wang.
\newblock An additive and operational entanglement measure: conditional
  entanglement of mutual information.
\newblock {\em Physical review letters}, 101(14):140501, 2008.
\newblock \href {http://arxiv.org/abs/0804.3683} {\path{arXiv:0804.3683}},
  \href {http://dx.doi.org/10.1103/PhysRevLett.101.140501}
  {\path{doi:10.1103/PhysRevLett.101.140501}}.

\end{thebibliography}

\end{document}